\newtheorem{theorem}{Theorem}
\newtheorem{lemma}{Lemma}
\renewcommand{\paragraph}[1]{\vspace{1mm}\noindent {\bf #1}}
\newcommand{\comm}[1]{}
\newcommand{\df}{\textbf}
\DeclareMathOperator{\inv}{inv}
\newcommand{\ZZ}{{\mathbb Z}}
\begin{document}

\title{Drawing Permutations with Few Corners}


\author{Sergey Bereg}
\author{Alexander E. Holroyd}
\author{Lev Nachmanson}
\author{Sergey Pupyrev}

\begin{abstract}
A permutation may be represented by a collection of paths in the
plane.  We consider a natural class of such representations, which we
call tangles, in which the paths consist of straight segments at 45
degree angles, and the permutation is decomposed into
nearest-neighbour transpositions.  We address the problem of
minimizing the number of crossings together with the number of corners
of the paths, focusing on classes of permutations in which both can
be minimized simultaneously. We give algorithms for computing such
tangles for several classes of permutations.
\end{abstract}

\maketitle

\section{Introduction}
\begin{figure}[b]
\centering
\includegraphics[scale=.7]{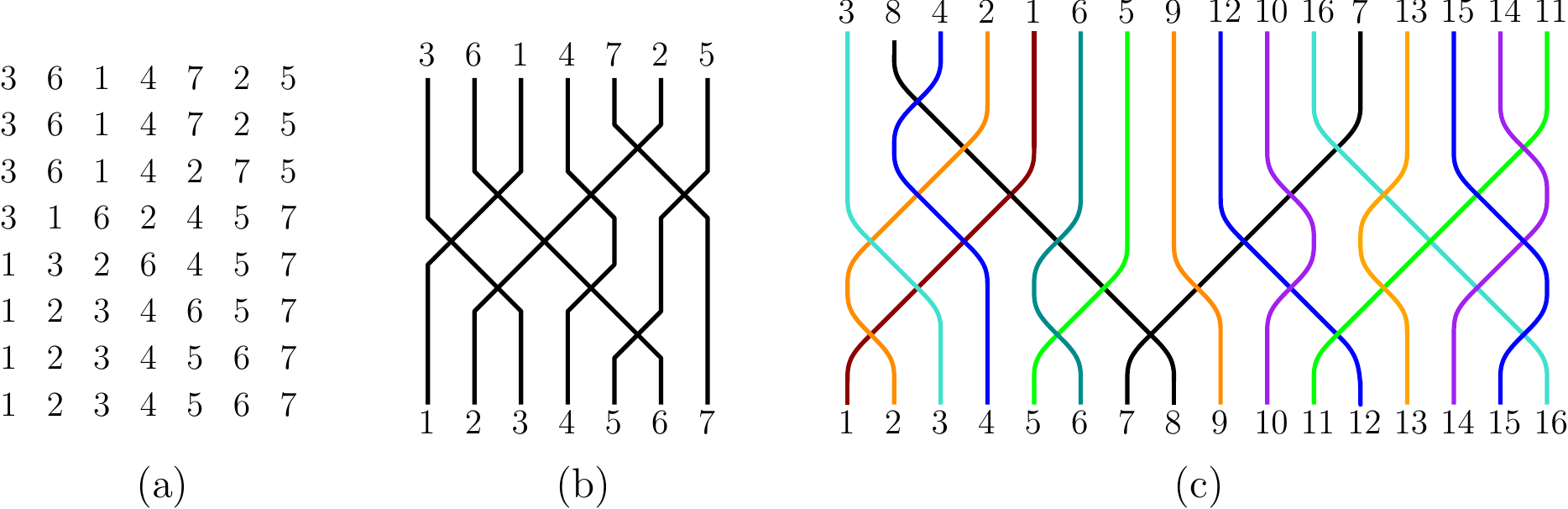}
\caption{
        (a)~A tangle solving the permutation $[3, 6, 1, 4, 7, 2, 5]$.
        (b)~A drawing of the tangle.
        (c)~An example of a perfect tangle drawing.
		(Here colors and rounded corners are employed to further enhance aesthetics and readability).}
\label{example}
\end{figure}
What is a good way to visualize a permutation? In this paper we study
drawings in which a permutation of interest is connected to the identity
permutation via a sequence of intermediate permutations, with consecutive
elements of the sequence differing by one or more non-overlapping
nearest-neighbour swaps.  The position of each permutation element through
the sequence may then traced by a piecewise-linear path comprising segments
that are vertical and $45^{\circ}$ to the vertical. Our goal is to keep these
paths as simple as possible and to avoid unnecessary crossings.

Such drawings have applications in various fields; for example, in channel
routing for integrated circuit design, see \cite{w-nrsil-91}.  Another
application is the visualization of metro maps and transportation networks
where some lines (railway tracks or roads) might partially overlap. A natural
goal is to draw the lines along their common subpaths so that an individual
line is easy to follow; minimizing the number of bends of a line and avoiding
unnecessary crossings between lines are natural criteria for map readability
(see Fig.~1(b) of~\cite{argyriou09}).  Much recent research in the graph
drawing community is devoted to edge bundling.  In this setting, drawing the
edges of a bundle with the minimum number of crossings and bends occurs as a
subproblem~\cite{pnbh-ifmpg-11}.

Let $S_n$ be the symmetric group of permutations $\pi=[\pi(1),\ldots,\pi(n)]$
on $\{1,\ldots,n\}$. The \df{identity permutation} is $[1, \dots, n]$, and
the \df{swap} $\sigma(i)$ transforms a permutation $\pi$ into $\pi \cdot
\sigma(i)$ by exchanging its $i$th and $(i+1)$th elements. (Equivalently,
$\sigma(i)$ is the transposition $(i,i+1)\in S_n$, and $\cdot$ denotes
composition.) Two permutations $a$ and $b$ of $S_n$ are \df{adjacent} if $b$
can be obtained from $a$ by swaps $\sigma(p_1), \sigma(p_2), \dots,
\sigma(p_k)$ that are not overlapping, that is, such that $|p_i-p_j|\ge 2$
for $i\ne j$.  A \df{tangle} is a finite sequence of permutations in which
each two consecutive permutations are adjacent.  An example of a tangle is
given in Fig.~\ref{example}. The associated drawing is composed of polylines
with vertices in $\mathbb{Z}^2$, whose segments can be vertical, or have
slopes of $\pm 45^{\circ}$ to the vertical.  The polyline traced by element
$i\in\{1,\ldots,n\}$ is called \df{path} $i$.  Note that by definition all
the path crossings occur at right angles.  We say that a tangle $T$
\df{solves} the permutation $\pi$ (or simply that $T$ is a tangle for $\pi$)
if the tangle starts from $\pi$ and ends at the identity permutation.

We are interested in tangles with informative and aesthetically pleasing
drawings. Our main criterion is to keep the paths straight by using only a
few turns.  A \df{corner} of path $i$ is a point at which it changes its
direction from one of the allowed directions (vertical, $+45^\circ$, or
$-45^\circ$) to another.  A change between $+45^\circ$ and $-45^\circ$ is
called a \df{double corner}.  We are interested in the total number of
corners of a tangle, where corners are always counted with multiplicity (so a
double corner contributes $2$ to the total).  By convention we require that
paths start and end with vertical segments.  In terms of the sequence of
permutations this means repeating the first and the last permutations at
least once each as in Fig.~\ref{example} (a).

Another natural objective is to minimize path crossings.  We call a tangle
for $\pi$ \df{simple} if it has the minimum number of crossings among all
tangles for $\pi$.  This is equivalent to the condition that no pair of paths
cross each other more than once, and this minimum number equals the {\em
inversion number} of $\pi$.  A simple tangle has no double corner, since that
would entail an immediate double crossing of a pair of paths.

In general, minimizing corners and minimizing crossings are conflicting
goals. For example, let $n=4k$ and $k\ge 4$ and consider the permutation
$$\pi=(2k,3,2,5,4,\dots,2k-1,2k-2,1,\quad 4k,2k+3,2k+2,\dots,4k-1,4k-2,2k+1).$$
It is not difficult to check that the minimum number of corners in a tangle
for $\pi$ is $4n-8$, while the minimum among simple tangles is $5n-20$, which
is strictly greater -- see Figure~\ref{fig:ex2} for the case $k=4$. Our focus
in this article is on two special classes of permutations for which corners
and crossings can be minimized simultaneously. The first is relatively
straightforward, while the second turns out to be much more subtle.
\begin{figure}[t]
\centering
\includegraphics[scale=.55]{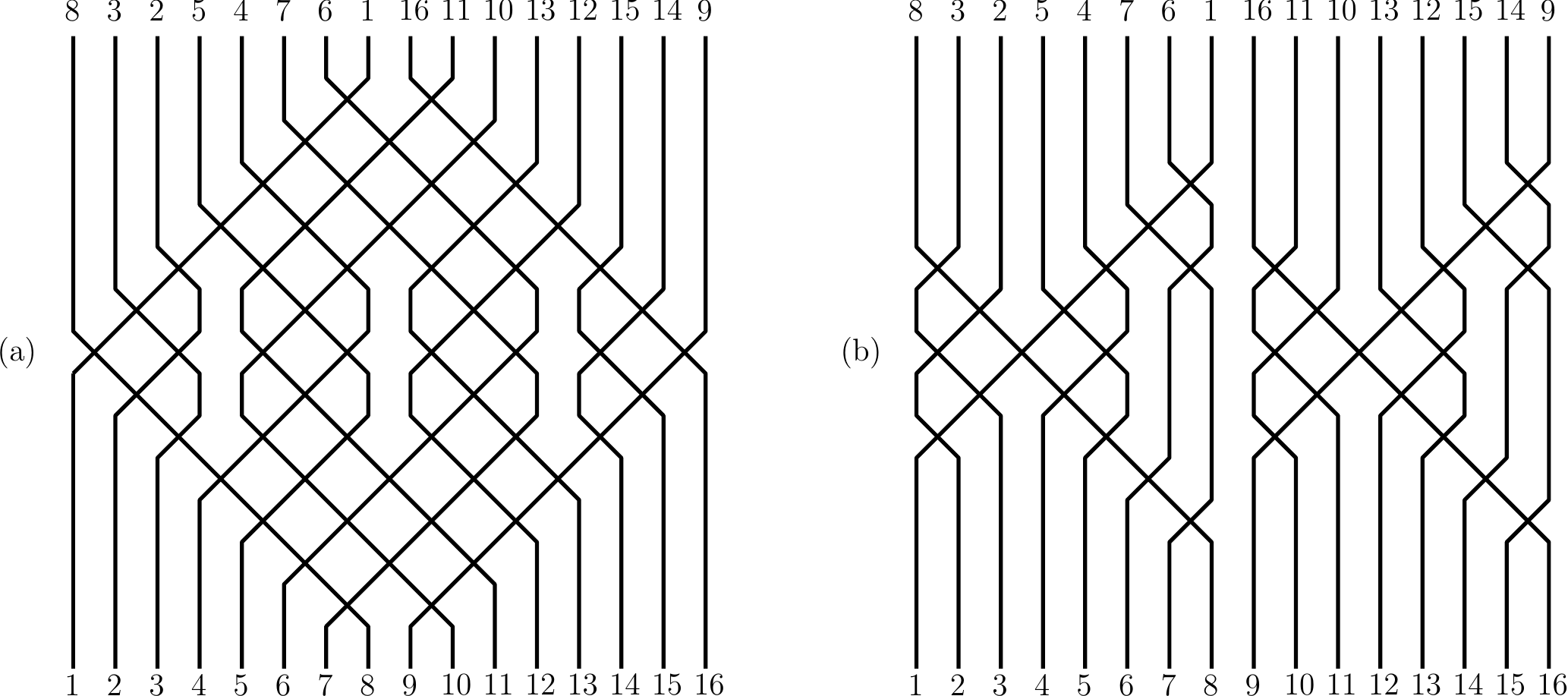}
\caption{(a) A tangle with $56$ corners.
(b) Every {\em simple} tangle for the same permutation has at least $60$ corners.}
\label{fig:ex2}
\end{figure}

One may ask the following interesting question. Is there an efficient
algorithm for finding a (simple) tangle with the minimum number of corners
solving a given permutation?  We do not know whether there is a
polynomial-time algorithm, either with or without the requirement of
simplicity.  We study this question in a forthcoming paper by the same authors and
we present approximation algorithms.  Here we give polynomial-time exact
algorithms for special classes of permutations.

Even the task of determining whether a given tangle has the minimum possible
corners among tangles for its permutation does not appear to be
straightforward in general (and likewise if we restrict to simple tangles).
However, in certain cases, such minimality is indeed evident, and we focus on
two such cases.  Firstly, we call a tangle \df{direct} if each of its paths
has at most $2$ corners (equivalently, at most one non-vertical segment).
Note that a direct tangle is simple.  Furthermore, it clearly has the minimum
number of corners among all tangles (simple or otherwise) for its
permutation.

We can completely characterize permutations admitting direct tangles.  We say
that a permutation $\pi\in S_n$ \df{contains} a pattern $\mu\in S_k$ if there
are integers $1 \le i_1 < i_2 < \dots < i_k \le n$ such that for all $1 \le r
< s \le k$ we have $\pi(i_r) < \pi(i_s)$ if and only if $\mu(r) < \mu(s)$;
otherwise, $\pi$ \df{avoids} the pattern.

\begin{theorem}
\label{thm:2corner} A permutation has a direct tangle if and only if it is
$321$-avoiding.
\end{theorem}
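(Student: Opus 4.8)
The plan is to prove both directions by tracking, for each path, the single direction in which it travels. Set up coordinates so that the tangle runs downward from $\pi$ to the identity; then path $i$ begins in column $\pi^{-1}(i)$ and ends in column $i$. Since a direct tangle is simple, every crossing occurs at a right angle, so the two segments meeting there are perpendicular: one has slope $+45^\circ$ and the other $-45^\circ$. Hence at each crossing the path momentarily on the left moves right and the one on the right moves left. Because a direct path has a single non-vertical segment it is monotone, so a path that ever moves right is a \emph{right-mover} (its final column exceeds its initial one, $i>\pi^{-1}(i)$) and one that ever moves left is a \emph{left-mover}, and no path is both. For necessity, suppose $\pi$ contains $a>b>c$ at positions $i_1<i_2<i_3$. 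All three pairs are inversions, so each crosses exactly once. At the crossing of $a$ and $b$, which start with $a$ left of $b$, path $a$ moves right and $b$ moves left, so $b$ is a left-mover; at the crossing of $b$ and $c$, which start with $b$ left of $c$, path $b$ moves right, so $b$ is a right-mover. A single diagonal cannot do both, a contradiction; hence a direct tangle forces $\pi$ to be $321$-avoiding.

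For the converse I would first record the structure forced by $321$-avoidance. For any inversion (value $x$ at position $p$, value $y$ at position $q$, with $p<q$ and $x>y$) one checks that $x>p$ and $y<q$: if instead $x\le p$, then among the positions weakly left of $x$ some value exceeds $x$, and with $x>y$ this produces a $321$ pattern; the statement for $y$ is symmetric. Thus each endpoint of an inversion already has the direction demanded at its crossings, so the global assignment of directions is consistent, and the right-movers $R$ and left-movers $L$ are each increasing subsequences partitioning $\{1,\dots,n\}$ (two right-movers in decreasing positional order would make the smaller one the right endpoint of an inversion, hence a left-mover, a contradiction). In particular every inversion is an $R$--$L$ pair.

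Next I would realise each path as a single diagonal of slope $+45^\circ$ (for $R$) or $-45^\circ$ (for $L$), flanked by vertical segments, with a start time, or delay, for each path still to be chosen. A bounding-box argument shows that for \emph{any} choice of delays no forbidden crossing occurs: a right-mover $u$ only ever occupies columns in $[\pi^{-1}(u),u]$ and a left-mover $w$ only columns in $[w,\pi^{-1}(w)]$, so for a non-inverted $R$--$L$ pair these ranges are disjoint and the two paths stay on opposite sides, while same-type paths run parallel and preserve their initial order. Since two such segments meet at most once, any valid delay assignment gives a simple tangle; what remains is to realise \emph{every} inversion as an actual crossing and to ensure that each elementary configuration is a genuine permutation changing only by non-overlapping swaps.

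The heart of the argument is therefore this scheduling problem: to choose the delays so that each inversion's crossing point falls inside the overlap of the two active intervals, no two simultaneous crossings are adjacent, and no path pauses once started (which would create a second diagonal). The zero-delay choice keeps same-type paths correctly ordered but, as simple examples show, can miss far-apart inversions, so nontrivial delays are unavoidable; the resulting constraints on the differences of delays form a system I expect to solve by an explicit inductive sweep. Concretely, I would route the largest value $n$ rightward first: by $321$-avoidance the elements to its right are increasing and are all left-movers, so the sweep pushes each of them one step further in its own direction and reverses no path, after which one recurses on $\{1,\dots,n-1\}$. The main obstacle is the timing of this recursion: one must interleave the sweep with the recursively built tangle so that the step an element takes during the sweep is contiguous in time with its later motion, leaving every path a single uninterrupted diagonal. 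Verifying that such a contiguous schedule always exists for $321$-avoiding permutations is the step I expect to require the most care.
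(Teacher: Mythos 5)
Your ``only if'' direction is correct and is essentially the paper's own argument: the middle element of a 321-pattern must cross a larger element starting to its left and a smaller element starting to its right, so it would need both an L-segment and an R-segment, which a direct path cannot have. The structural facts you derive for the converse are also correct: in a 321-avoiding permutation every left endpoint of an inversion has positive net displacement, the prospective right-movers and left-movers each form increasing subsequences, and non-inverted R--L pairs have disjoint column ranges.

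The gap is in the heart of the ``if'' direction, and you name it yourself: the scheduling of the delays is never carried out. Two concrete problems. First, your claim that ``for \emph{any} choice of delays no forbidden crossing occurs'' is false: two right-movers $u<v$ with overlapping column ranges (say $u$ travels from column $1$ to column $5$ while $v$ travels from column $3$ to column $6$) produce, if $u$'s diagonal is scheduled while $v$ is still waiting, a crossing of $u$'s diagonal with $v$'s vertical segment; such a crossing is not a legal swap, so the drawing is not a tangle at all. Hence the delays are constrained even for same-type pairs, not only by the requirement that inverted pairs actually meet. Second, and decisively, your sweep-of-$n$ recursion founders exactly where you say it does: after $n$ sweeps rightward, each element that swapped with $n$ has taken one step left and must then wait for the recursively built tangle to begin; any such wait inserts a vertical segment between two L-segments, i.e.\ two extra corners, destroying directness. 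Proving that the recursion can always be interleaved so that every path's diagonal is contiguous is not a routine verification --- it is the entire content of the hard direction. The paper resolves it with a finer induction: remove a single inversion at a descent, $\pi'=\pi\cdot\sigma(s)$, take a direct tangle $T'$ for $\pi'$, and show by a split / no-split case analysis (using Lemma~\ref{lm:no-LR}, that in a simple tangle for a 321-avoiding permutation no path has both an L- and an R-segment) that the topmost swaps of $T'$ in positions $s-1$ and $s+1$ sit at equal heights exactly one below the topmost swap in position $s$, so that when $T'$ is shifted up against the new cross, the cross extends existing segments rather than creating new corners. Some argument of this kind --- a proof that contiguity can always be achieved, not just an expectation that it can --- is what your proposal still needs.
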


Our proof will yield a straightforward algorithm that constructs a direct tangle
for a given $321$-avoiding permutation.

Our second special class of tangles naturally extends the notion of a
direct tangle, but turns out to have a much richer theory.  A \df{segment} is
a straight line segment of a path between two of its corners; it is an
\df{L-segment} if it is oriented from north-east to south-west, and an
\df{R-segment} if it is oriented from north-west to south-east.  We call a
tangle \df{perfect} if it is simple and each of its paths has at most one
L-segment and at most one R-segment.
Any perfect tangle has the minimum possible corners among all tangles solving
its permutation, and indeed it has the minimum possible corners on path $i$,
for each $i=1,\dots, n$.  To see this, note that if $i$ has an L-segment in a
perfect tangle for $\pi$ then there must be an element $j>i$ with
$\pi(i)<\pi(j)$ (whose path crosses this L-segment). Hence, a L-segment must
be present in any tangle for $\pi$. The same argument applies to R-segments.
We call a permutation \df{perfect} if it has a perfect tangle.
%
%
\begin{theorem}
\label{thm:perfect} There exists a polynomial-time algorithm that determines
whether a given permutation is perfect, and, if so, outputs a perfect tangle.
\end{theorem}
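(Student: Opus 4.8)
The plan is to reduce the question to a combinatorial feasibility problem solvable in polynomial time, exploiting the fact that in any simple tangle the crossings are \emph{forced}: two paths cross exactly once if they form an inversion and never otherwise. First I would analyse what a single crossing imposes on the two paths meeting at it. At any adjacent swap the path leaving from the left moves right and the one leaving from the right moves left, so at each crossing one of the two paths performs part of an L-segment and the other part of an R-segment. Since a perfect tangle grants each path only one L-segment and one R-segment, \emph{all} crossings in which a given path $i$ moves left must lie on its single L-segment, and all crossings in which it moves right must lie on its single R-segment. Splitting the inversions of $\pi$ accordingly, let $L_i$ and $R_i$ denote the sets of partners that $i$ crosses while moving left, respectively right. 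Then the crossings carried by each segment, the horizontal extent of each segment, and the net displacement $|R_i|-|L_i|$ of path $i$ are all determined by $\pi$ alone. Consequently a perfect tangle is specified by very little residual data: for each path that carries both an L- and an R-segment, the \emph{order} of the two segments (left-then-right or right-then-left), together with a consistent assignment of heights to all crossings.

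The second step turns geometric realisability into local constraints on this residual data. Working in the position--time plane, an L-segment lies on a line of slope $-1$ and an R-segment on a line of slope $+1$; I would parametrise them by $\alpha_i=x+t$ and $\beta_i=x-t$, so that the crossing of an inverted pair $i,j$ occurs at the intersection of the L-line of one with the R-line of the other, at time $(\alpha_i-\beta_j)/2$. Collinearity then forces the order in which a path meets the members of $L_i$ to be monotone in the partners' parameters $\beta_j$, and symmetrically for R-crossings; moreover any two crossings sharing a common path must be time-ordered consistently with that path's own segment order. I would extract from these requirements a system of difference constraints on the real parameters $\{\alpha_i,\beta_i\}$ together with the per-path binary order choices, each constraint involving only a bounded number of paths.

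The third step is to solve this system. The order choices are naturally Boolean (the relative order of a path's two segments is a single bit), which I expect to encode as a $2$-SAT instance; for any satisfying choice of orders, the remaining height assignment becomes a pure system of difference constraints, feasible exactly when an associated constraint graph has no negative cycle and solvable by a shortest/longest-path computation. If the combined system is feasible I would read off the parameters and vertical connectors, snap to the integer grid, and emit the resulting sequence of permutations; a final pass verifies that it is a genuine simple tangle in which each path has at most one L- and one R-segment, hence perfect. Infeasibility certifies that $\pi$ is not perfect. Every stage is polynomial in $n$.

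The main obstacle is the passage from local to global in steps two and three: proving that the bounded-arity constraints are not merely necessary but \emph{sufficient}, so that every feasible assignment of orders and heights can in fact be drawn as a valid tangle without forcing some path to acquire a second L- or R-segment, or some pair to cross twice. The difficulty concentrates in the non-monotone paths, those using both an L- and an R-segment, whose order choice propagates into the timing of crossings on neighbouring paths; one must rule out cyclic chains of such propagations admitting no consistent resolution, and confirm that any remaining scheduling freedom can always be realised. This is presumably the ``much more subtle'' phenomenon alluded to above, and is where I would expect the bulk of the technical work, as well as the precise identification of the tractable constraint class, to lie.
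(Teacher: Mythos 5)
Your first two steps are sound and in fact recover the paper's central object: since in a simple tangle the crossings are forced by the inversions (Lemma~\ref{lm:invx}), the partition of each path's crossings between its single L-segment and single R-segment is determined by $\pi$, and the only residual combinatorial freedom is the binary order choice ($LR$ versus $RL$) for each path needing both segments. This is precisely what the paper calls a \emph{marking}. The genuine gap is in your third step, where you assert that feasibility over these Boolean choices can be expressed as a 2-SAT instance whose clauses have bounded arity, coupled with per-assignment difference constraints. The paper's characterization (Theorem~\ref{thm:main}) shows that the true condition is of a fundamentally different type: for every \emph{rec} --- a quadruple $a,b,c,d$ occurring in this order in $\pi$ with $\min\{a,b\}>\max\{c,d\}$, whose four paths bound a rectangle in any perfect tangle --- the \emph{number} of switchback paths routed into the left side of that rectangle must equal the number routed into the right side (and irregular recs must receive none). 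This is a cardinality constraint coupling an unbounded number of your order bits at once: in your language, the negative cycles of the height system wrap around such rectangles, and each switchback contributes one unit of height to one side or the other, so consistency is exactly a counting condition. Such balance conditions are not 2-SAT expressible (2-SAT solution sets are closed under coordinatewise majority; balanced assignments are not), which is why the paper must search for a valid assignment by \emph{maximum vertex-weighted matching} on a graph built from minimal recs, not by 2-SAT. So your claim that ``each constraint involves only a bounded number of paths'' is precisely the point that fails.

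Separately, even with the correct feasibility condition in hand, the sufficiency half --- that any assignment satisfying it can actually be drawn without creating extra segments or double crossings --- is where the real work lies, and your proposal explicitly defers it. In the paper this occupies the entire appendix: one first perturbs a balanced marking into an \emph{aligned} one (Lemma~\ref{lm:X}), then inducts on the inversion number as in Theorem~\ref{thm:2corner}, using the balance condition at each step to prove that the two paths about to be swapped arrive at the same height (Lemma~\ref{lm:height}); balance is exactly what makes the two sides of the height equation around each rectangle agree. Without identifying the balance condition and proving this implication, your plan amounts to an encoding of necessary constraints together with an unsubstantiated (and, as formulated via 2-SAT, false) claim that they are sufficient and tractable.
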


A straightforward implementation of our algorithm takes $O(n^5)$ time, but we believe this
can be reduced to $O(n^3)$, and possibly further.
Our proof of Theorem~\ref{thm:perfect} involves an explicit characterization
of perfect permutations, but it is considerably more complicated than in the
case of direct tangles.  We will introduce the notion of a {\em marking},
which is an assignment of symbols to the elements $1,\ldots,n$ indicating the
directions in which their paths should be routed.  We will prove that a
permutation is perfect if and only if it admits a marking satisfying a {\em
balance} condition that equates numbers of elements in various categories.
Finally we will show that the existence of such a marking can be decided by
finding a maximum vertex-weighted matching \cite{Spencer84} in a certain
graph with vertex set $1,\ldots,n$ constructed from the permutation.

The number of perfect permutations in $S_n$ grows only exponentially with $n$
(see Section~\ref{sect:perfect}), and is therefore $o(|S_n|)$. Nonetheless,
perfect permutations are very common for small $n$: all permutations in $S_6$
are perfect, as are all but $16$ in $S_7$, and over half in $S_{13}$.


\paragraph{Related work.}
We are not aware of any other study on the number of corners in a tangle. To
the best of our knowledge, the problems formulated here are new. Wang
in~\cite{w-nrsil-91} considered the same model of drawings in the field of
VLSI design. However, ~\cite{w-nrsil-91} targets, in our terminology, the
tangle height and the total length of the tangle paths. The algorithm
suggested by Wang is a heuristic and produces paths with many unnecessary
corners. 

The perfect tangle problem is related to the problem of drawing graphs in
which every edge is represented by a polyline with few bends. In our setting,
all the crossings occur at right angles, as in so-called
RAC-drawings~\cite{walter09}.  We are not aware of any other study where
these two criteria are considered together.

Decomposition of permutations into nearest-neighbour transpositions was
considered in the context of permuting machines and pattern-restricted
classes of permutations~\cite{Albert07}. In our terminology,
Albert~et.~al.~\cite{Albert07} proved that it is possible to check in
polynomial time whether for a given permutation there exists a tangle of
length $k$ (i.e.\ consisting of $k$ permutations), for a given $k$. Tangle
diagrams appear in the drawings of sorting networks~\cite{knuth73,Angel07}.
We also mention an interesting connection with change ringing (English-style
church bell ringing), where similar visualizations are used~\cite{bells}.
In the language of change ringing, a tangle with minimum corners is
``a link method that produces a given row with minimum of changes of direction".

\section{Preliminaries}
\label{sect:str}
We always draw tangles oriented downwards with the sequence of permutations
read from top to bottom as in Fig.~\ref{fig:ex2}.
%
%
%
The following notation will be convenient. We write $\pi=[\dots a\dots b\dots
c \dots]$  to mean that $\pi^{-1}(a)<\pi^{-1}(b)<\pi^{-1}(c)$, and
$\pi=[\dots ab\dots]$ to mean that $\pi^{-1}(a)+1=\pi^{-1}(b)$, etc. A pair
of elements $(a,b)$  is an \df{inversion} in a permutation $\pi \in S_n$ if
$a>b$ and $\pi = [\dots a \dots b \dots]$. The \df{inversion number}
$\inv(\pi)\in[0,\binom{n}{2}]$ is the number of inversions of $\pi$. The
following useful lemma is straightforward to prove.

\begin{lemma}
\label{lm:invx}
In a simple tangle for permutation $\pi$, a pair $(i, j)$ is an inversion in $\pi$ if
and only if some R-segment of path $i$ intersects some L-segment of path $j$.
\end{lemma}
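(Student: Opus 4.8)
The plan is to reduce the statement to two facts about simple tangles recorded in the preliminaries: that every crossing of two paths occurs at a right angle, and that in a simple tangle no two paths cross more than once. Drawing the tangle downward, the top row realizes $\pi$ (so element $i$ begins in column $\pi^{-1}(i)$) and the bottom row realizes the identity (so element $i$ ends in column $i$). First I would fix the orientation bookkeeping: as a path is traversed downward, an R-segment strictly increases its column (moves right), an L-segment strictly decreases it (moves left), and a vertical segment keeps the column constant. Since the only pair of allowed slopes meeting at a right angle is $+45^\circ$ and $-45^\circ$, every crossing of two paths is necessarily the transversal intersection of an R-segment of one path with an L-segment of the other.

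Next I would treat the forward direction. Suppose $(i,j)$ is an inversion, i.e.\ $i>j$ and $\pi^{-1}(i)<\pi^{-1}(j)$. Then at the top path $i$ lies to the left of path $j$, while at the bottom, since $i>j$, path $i$ lies to the right of path $j$. Hence the two paths reverse their relative horizontal order and must cross an odd number of times; in a simple tangle this forces exactly one crossing. By the right-angle observation, at this crossing one path carries an R-segment and the other an L-segment. It remains to identify which is which: immediately below the crossing the path on the R-segment is the one lying to the right, and since path $i$ is the path that ends to the right, path $i$ must carry the R-segment and path $j$ the L-segment. This is exactly the asserted intersection.

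For the converse I would argue by the same monotonicity. Suppose some R-segment of path $i$ meets some L-segment of path $j$ at a point $P$. Just below $P$ the R-segment of $i$ lies to the right of the L-segment of $j$, so $i$ is to the right of $j$ there; because a simple tangle has at most one crossing of this pair, the relative order cannot change again below $P$, so path $i$ ends to the right of path $j$, giving $i>j$. Symmetrically, just above $P$ path $i$ lies to the left of path $j$, and this order persists up to the top, giving $\pi^{-1}(i)<\pi^{-1}(j)$. Together these say $(i,j)$ is an inversion.

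The routine content is entirely the sign and orientation bookkeeping above; the one point that genuinely needs simplicity, and which I expect to be the crux to state cleanly, is the persistence of relative horizontal order away from the unique crossing. The ``at most one crossing'' property is precisely what guarantees that the local picture at $P$ (or at the forced crossing in the forward direction) propagates unchanged to the two endpoints, upgrading the correspondence between inversions and R-segment/L-segment intersections from a one-way implication to an equivalence.
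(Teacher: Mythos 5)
Your proof is correct. Note that the paper itself offers no proof of this lemma, stating only that it is ``straightforward to prove,'' so there is no official argument to compare against; your monotonicity-and-parity argument (paths of an inverted pair must reverse horizontal order, hence cross an odd number of times, hence exactly once by simplicity, with the right-angle property identifying the crossing as an R-segment/L-segment intersection and the unique-crossing property propagating the local order to both endpoints) is precisely the standard argument the authors presumably had in mind, and it correctly handles the one delicate point: determining which of the two paths carries the R-segment at the crossing.
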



\section{Direct Tangles}
\label{sect:2corner}

In this section we prove Theorem~\ref{thm:2corner}.
We need the following properties of 321-avoiding permutations.

\begin{lemma}
\label{lm:mono} Suppose $\pi, \pi'$ are permutations with
$\inv(\pi')=\inv(\pi)-1$ and $\pi' = \pi \cdot \sigma(i)$ for some swap
$\sigma(i)$. If $\pi$ is 321-avoiding then so is $\pi'$.
\end{lemma}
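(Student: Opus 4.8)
The plan is to prove the contrapositive: assume $\pi'$ contains a $321$ pattern and show that $\pi$ must also contain one. So suppose there are positions with values $a > b > c$ appearing in that left-to-right order in $\pi'$, i.e.\ $\pi' = [\dots a \dots b \dots c \dots]$. We know $\pi = \pi' \cdot \sigma(i)$ (since $\sigma(i)$ is an involution), so $\pi$ is obtained from $\pi'$ by swapping the values at positions $i$ and $i+1$. Moreover, the hypothesis $\inv(\pi') = \inv(\pi) - 1$ tells us that passing from $\pi'$ to $\pi$ \emph{increases} the inversion number by one; equivalently, the pair of values sitting at positions $i, i+1$ in $\pi'$ is \emph{not} an inversion (the smaller value is on the left), and swapping them in $\pi$ creates an inversion.

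First I would observe that a single adjacent swap changes the relative order of exactly one pair of values --- namely the two values occupying positions $i$ and $i+1$. All other pairs retain their relative left-to-right order between $\pi'$ and $\pi$. Hence the witnessing triple $(a,b,c)$ survives as a $321$ pattern in $\pi$ \emph{unless} the swap disrupts one of the three pairwise orderings among $a, b, c$, which can only happen if two of these three values are precisely the pair swapped, i.e.\ occupy the adjacent positions $i, i+1$.

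The key step is then a short case analysis on which adjacent pair of the triple gets swapped. The only potentially dangerous case is when the two swapped values are consecutive \emph{in the pattern} and adjacent \emph{in position}: either $(a,b)$ with $a$ immediately left of $b$, or $(b,c)$ with $b$ immediately left of $c$. I would handle the $(b,c)$ case (the other is symmetric): here $\pi'$ reads $\dots a \dots bc \dots$ with $b > c$, so $(b,c)$ is already an inversion in $\pi'$. But then swapping positions $i,i+1$ would \emph{remove} an inversion, contradicting $\inv(\pi) = \inv(\pi') + 1$. Thus the swapped pair, being non-inverted in $\pi'$, can never coincide with a descending adjacent pair of the $321$ witness, so the triple $(a,b,c)$ remains a $321$ pattern in $\pi$, and $\pi$ fails to be $321$-avoiding.

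I expect the only real subtlety to be bookkeeping the interaction between the two facts we are given: that $\pi' = \pi\cdot\sigma(i)$ identifies \emph{which} pair is swapped, while the inversion-number condition pins down the \emph{direction} of the swap (non-inversion to inversion). Carefully combining these rules out exactly the degenerate cases, so the argument is clean once the setup is right; no heavy computation is needed.
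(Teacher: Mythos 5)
Your proof is correct and follows essentially the same route as the paper: both argue the contrapositive, using the fact that since the swap taking $\pi'$ to $\pi$ increases the inversion number, the swapped pair is non-inverted in $\pi'$ and hence every inversion of $\pi'$ (in particular the two inversions forming the $321$ witness) survives in $\pi$. The paper simply compresses your case analysis into the one-line observation that inversions of $\pi'$ are inversions of $\pi$.
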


\begin{proof}
Let us suppose that elements $i,j,k$ form a 321-pattern in $\pi'$. Then $(i,
j)$ and $(j, k)$ are inversions in $\pi'$. Inversions of $\pi'$ are
inversions of $\pi$, hence, elements $i, j, k$ form a 321-pattern in $\pi$.
\end{proof}

\begin{lemma}
\label{lm:no-LR} In a simple tangle solving a 321-avoiding permutation, no
path has both an L-segment and an R-segment.
\end{lemma}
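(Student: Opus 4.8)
The plan is to argue by contradiction, showing that a path carrying both an L-segment and an R-segment forces a $321$-pattern in $\pi$, contradicting the hypothesis. Recall that path $i$ runs from column $\pi^{-1}(i)$ at the top of the tangle to column $i$ at the bottom, so that its horizontal motion is governed entirely by swaps. The key conversion tool is Lemma~\ref{lm:invx}: a geometric crossing of an R-segment of one path with an L-segment of another is exactly an inversion of the corresponding elements.

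First I would record the combinatorial meaning of each of the two segments on path $i$. Consider a single unit step of the R-segment of $i$, from $(p,t)$ to $(p+1,t+1)$; this step realises the swap $\sigma(p)$ at that level, so the element $j$ sitting immediately to the right of $i$ moves down-left, i.e.\ along an L-direction, and the two paths cross. Thus the R-segment of $i$ meets an L-segment of $j$, and Lemma~\ref{lm:invx} gives that $(i,j)$ is an inversion, i.e.\ $i>j$ and $\pi^{-1}(i)<\pi^{-1}(j)$. Symmetrically, a unit step of the L-segment of $i$ realises a swap with the element $k$ immediately to its left, which then moves down-right along an R-direction; so an R-segment of $k$ meets the L-segment of $i$, and Lemma~\ref{lm:invx} yields that $(k,i)$ is an inversion, i.e.\ $k>i$ and $\pi^{-1}(k)<\pi^{-1}(i)$.

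Combining the two inequalities then finishes the argument: we obtain elements with $k>i>j$ while their positions satisfy $\pi^{-1}(k)<\pi^{-1}(i)<\pi^{-1}(j)$. Reading $\pi$ from left to right, the values at these three increasing positions are $k>i>j$, a strictly decreasing triple, which is precisely a $321$-pattern. This contradicts the assumption that $\pi$ is $321$-avoiding, so no path can carry both an L-segment and an R-segment.

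The step requiring the most care is the very first one, namely justifying that each nonvertical segment of $i$ genuinely participates in a crossing, so that Lemma~\ref{lm:invx} can be invoked. This rests on the defining structure of a tangle: an element changes column only through a swap, and every swap is a crossing with a neighbour moving in the opposite diagonal direction. Once this is made precise — that each rightward unit step of $i$ is a crossing with a left-moving path, and each leftward unit step is a crossing with a right-moving path — the translation into the two inversions $(i,j)$ and $(k,i)$ via Lemma~\ref{lm:invx} is immediate, and the $321$-pattern follows with no further computation.
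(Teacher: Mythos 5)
Your proof is correct and follows essentially the same route as the paper: apply Lemma~\ref{lm:invx} once to the R-segment and once to the L-segment of the doubly-bent path to obtain two inversions, which together form a $321$-pattern. The only difference is that you explicitly justify that every diagonal segment participates in a crossing with a neighbour moving in the opposite direction, a fact the paper's proof leaves implicit.
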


\begin{proof}
Consider a simple tangle solving a 321-avoiding permutation $\pi$. Suppose
path $j$ crosses path $i$ during $j$'s R-segment and crosses path $k$ during
$j$'s L-segment.  By Lemma~\ref{lm:invx} we have $\pi=[\dots k \dots j \dots
i]$ while $i < j < k$, giving a 321-pattern, which is a
contradiction.\end{proof}

We say that a permutation $\pi\in S_n$ has a \df{split} at location $k$ if
$\pi(1), \ldots, \pi(k) \in \{1, \ldots, k\}$, or equivalently if
$\pi(k+1), \ldots, \pi(n) \in \{k+1, \ldots, n\}$.


\begin{proof}[Theorem~\ref{thm:2corner}]
To prove the ``only if'' part, suppose that tangle $T$ solves a permutation
$\pi$ containing a $321$-pattern.  Then there are $i < j < k$ with
$\pi=[\dots k \dots j \dots i]$.  Hence by Lemma~\ref{lm:invx}, $j$ has an
L-segment and an R-segment, so $T$ is not direct.

We prove the ``if'' part by induction on the inversion number of the
permutation. If $\inv(\pi) = 0$ then $\pi$ is the identity permutation, which
clearly has a direct tangle. This gives us the basis of induction.

Now suppose that $\pi$ is 321-avoiding and not the identity permutation, and
that every 321-avoiding permutation (of every size) with inversion number
less than $\inv(\pi)$ has a direct tangle. There exists $s$ such that $\pi(s)
> \pi(s+1)$; fix one such. Note that $(\pi(s), \pi(s+1))$ is an inversion of
$\pi$; hence, the permutation $\pi' := \pi \cdot \sigma(s)$ has $\inv(\pi') =
\inv(\pi) - 1$, and is also 321-avoiding by Lemma~\ref{lm:mono}. By the
induction hypothesis, let $T'$ be a direct tangle solving $\pi'$.

First perform a swap $x$ in position $s$ exchanging elements $\pi(s)$ and
$\pi(s+1)$, and draw it as a cross on the plane with coordinates $(s, h)$,
where $h\in \ZZ$ is the height ($y$-coordinate) of the cross (chosen
arbitrarily).  We assume that the position axis increases from left to right
and the height axis increases from bottom to top. Then draw the tangle $T'$
below the cross. This gives a tangle solving $\pi$, which is certainly
simple. We will show that the heights of swaps may be adjusted to make the
new tangle direct. To achieve this, the L-segment and R-segment comprising
the swap $x$ must either extend existing segments in $T'$, or must connect to
vertical paths having no corners in $T'$. We consider two cases.
\begin{figure}[t]
    \center
    \subfigure[]{
    \includegraphics[width=3.7cm]{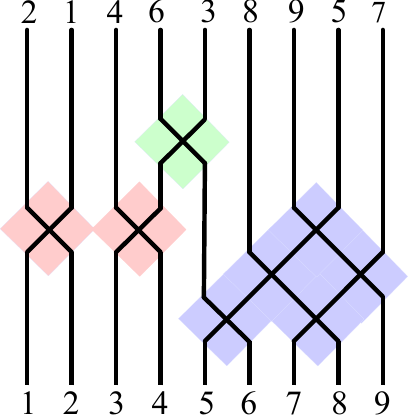}}
\qquad 
	\subfigure[]{
	\includegraphics[width=3.7cm]{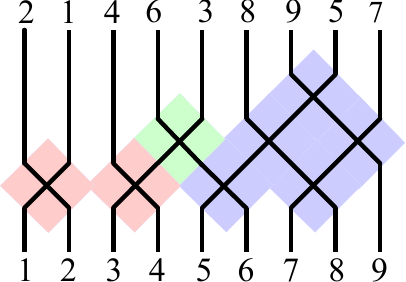}}
    \caption{Shifting two sub-tangles ($T_1$ is red, $T_2$ is blue)
    upward to touch the initial swap
     $x$ (green) in position $s=4$.}
    \label{fig:shift}
\end{figure}

\paragraph{Case 1.}
Suppose that $\pi'$ has a split at $s$. Then $T'$ consists of a tangle
    $T_1$ for the permutation $[\pi'(1), \dots, \pi'(s)]$ together with
    another tangle $T_2$ for $[\pi'(s+1), \dots, \pi'(n)]$; see
    Fig.~\ref{fig:shift}. Starting with $T_1$ drawn below $x$,
    simultaneously shift all the swaps of $T_1$ upward until one of them
    touches $x$; in other words, until $T_1$'s first swap in position
    $s-1$ occurs at height $h-1$. Or, if $T_1$ has no swap in position
    $s-1$, no shifting is necessary. Similarly shift $T_2$ upward (if
    necessary) until it touches $x$ from the right side. This results in
    a direct tangle.

\paragraph{Case 2.}
Suppose that $\pi'$ has no split at $s$. Let $T'$ be any direct tangle for  $\pi'$, and
again shift it upward until it touches $x$, resulting in a tangle $T$ for
$\pi$. Write $h_j$ for the height of the topmost swap in position $j$ in
$T'$, or let $h_j = -\infty$ if there is none. We claim that $h_{s-1} =
h_{s+1} = h_s+1 > -\infty$, which implies in particular that $1 < s < s+1
< n$. Thus $T'$ has swaps in the positions immediately left and right of
$x$, both of which touch $x$ simultaneously in the shifting procedure as
in Fig.~\ref{fig:cases}(a), giving that $T$ is direct  as required. To
prove the claim, first note that $h_s > -\infty$ since $\pi'$ has no
split at $s$. Therefore $\max\{h_{s-1},h_{s+1}\} > h_s$, otherwise $T$
would not be simple, as in Fig.~\ref{fig:cases}(b). Thus, without loss of
generality suppose that $h_{s-1} > h_s$ and $h_{s-1}\geq h_{s+1}$. Then
$h_s=h_{s-1}-1$, otherwise some path would have more than 2 corners in
$T'$, specifically, the path of the element that is in position $s$ after
$h_{s-1}$; see Fig.~\ref{fig:cases}(c) or (d). Now suppose for a
contradiction that $h_{s+1} < h_{s-1}$, which includes the possibility
that $h_{s+1}=-\infty$, perhaps because $s+1 = n$. Then in the new tangle
$T$, path $\pi(s)$ contains both an L-segment and an R-segment as in
Fig.~\ref{fig:cases}(e), which contradicts Lemma~\ref{lm:no-LR}.
\end{proof}
\begin{figure}[t]
    \center
	\includegraphics[height=3.5cm]{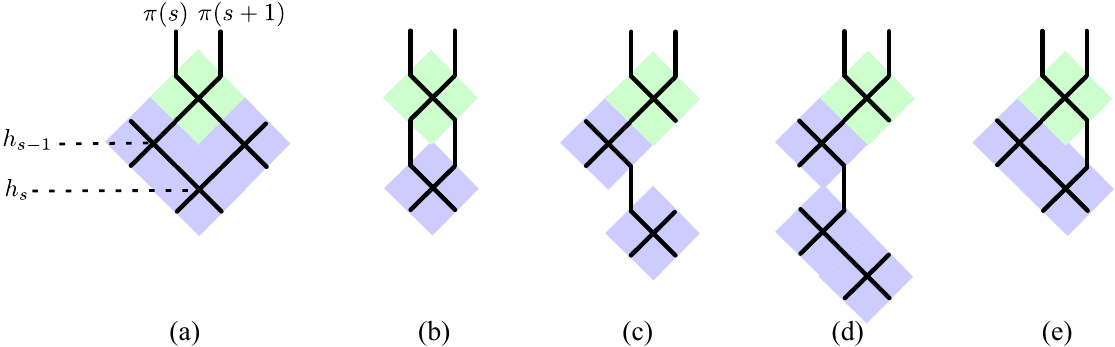}
    \caption{(a)~The tangle $T'$ (blue) touches the swap $x$ (green) on both sides.
    (b)--(e)~Various impossible configurations for the proof.}
    \label{fig:cases}
\end{figure}

The proof of the Theorem~\ref{thm:2corner} easily yields an algorithm that
returns a direct tangle  for $\pi\in S_n$ if one exists, and otherwise stops.
This algorithm can be implemented so as to run in $O(n^2)$ time.
With a suitable choice of output format, this can in fact be improved to $O(n)$.
\footnote{See the demo at
\url{http://www.utdallas.edu/~besp/very-nice-tangle.php}}

\section{Perfect Tangles}
\label{sect:perfect}

In this section we give our characterization of perfect permutations.

Given a permutation $\pi\in S_n$, we introduce the following classification
scheme of elements $i\in\{1,\ldots,n\}$. The scheme reflects the possible
forms of paths in a perfect tangle, although the definitions themselves are
purely in terms of the permutation. We call $i$ a \df{right} element if it
appears in some inversion of the form $(i,j)$, and a \df{left} element if it
appears in some inversion $(j,i)$. We call $i$ \df{left-straight} if it is
left but not right, \df{right-straight} if it is right but not left, and a
\df{switchback} if it is both left and right.

In order to build a perfect tangle we use a notion of marking. A \df{marking}
$M$ is a function from the set $\{1, \dots , n\}$ to strings of letters $L$
and $R$. For any tangle $T$, we associate a corresponding marking $M$ as
follows.  We trace the path $i$ from top to bottom; as we meet an L-segment
(resp.\ R-segment), we append an $L$ (resp.\ $R$) to $M(i)$. Vertical
segments are ignored this purpose, hence a vertical path with no corners is
marked by an empty sequence $\emptyset$. For example, $M(3)=R$ and $M(13)=LR$
in Fig.\ \ref{example} (c).  A marking corresponding to a perfect tangle takes
only values $\emptyset$, $L$, $R$, $LR$, and $RL$.  We sometimes write
$M(i)=R\dots$ to indicate that the string $M(i)$ starts with $R$.

Given a permutation $\pi$ and a marking $M$, there does not necessarily exist
a corresponding tangle. However, we will obtain a necessary and sufficient
condition on $\pi$ and $M$ for the existence of a corresponding perfect
tangle. Our strategy for proving Theorem~\ref{thm:perfect} will be to find a
marking satisfying this condition, and then to find a corresponding perfect
tangle. We say that a marking $M$ is a \df{marking for} a permutation $\pi
\in S_n$ if (i)~$M(i)=L$ (respectively $M(i)=R$) for all left-straight
(right-straight) elements $i$, (ii)~$M(i)\in\{LR,RL\}$ for all switchbacks,
and (iii)~$M(i)=\emptyset$ otherwise.




To state the necessary and sufficient condition mentioned above, we need some
definitions. A quadruple $(a, b, c, d)$ is a \df{rec} in permutation $\pi$ if
$\pi=[\dots a\dots b\dots c \dots d\dots]$ and $\min\{a,b\}>\max\{c,d\}$. In
a perfect tangle, the paths comprising a rec form a rectangle; see
Fig.~\ref{fig:recs}
 (``rec'' is an abbreviation for rectangle).
Let $M$ be a marking for $\pi \in S_n$, and let $\rho$ be a rec $(a, b, c,
d)$ in $\pi$. We call $e$ a \df{left switchback} of $\rho$ if (i)~$M(e)=RL$,
(ii)~$\pi=[\dots a \dots e\dots b\dots]$, and (iii)~$c < e < d$ or $d < e <
c$. Symmetrically, we call $e$ a \df{right switchback} of $\rho$ if
$M(e)=LR$, and $\pi=[\dots c \dots e\dots d\dots]$, and $a < e < b$ or $b < e
< a$. A rec $(a, b, c, d)$ is \df{regular} if $a < b$ and $c < d$, otherwise
it is \df{irregular}. A rec is called \df{balanced} under $M$ if the number
of its left switchbacks is equal to the number of its right switchbacks; a
rec is \df{empty} if it has no switchbacks.


Here is our key definition. A marking $M$ for a permutation $\pi$ is called \df{balanced} if every regular rec
of $\pi$ is balanced and every irregular rec is empty under $M$.
%
\begin{figure}[t]
    \center
    \includegraphics[width=5cm]{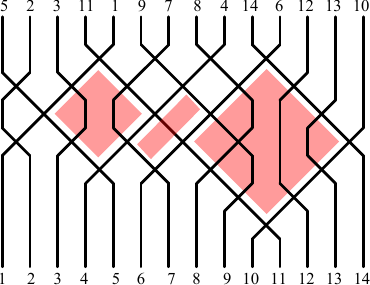}
    \caption{A permutation with a balanced marking.
    Some of the recs of the permutation are: $\rho_1=(5, 11, 1, 4)$,
     $\rho_2=(9, 7, 4, 6)$, $\rho_3=(11, 14, 6, 10)$;
    $\rho_1$ and $\rho_2$ are regular, while $\rho_2$ is irregular.
    Left switchbacks of rec $\rho_3$ are $8$ and $9$, right switchbacks are $12$ and $13$.
    The empty irregular rec $\rho_2$ has neither left nor right switchbacks.}
    \label{fig:recs}
\end{figure}
\begin{theorem}
\label{thm:main}
A permutation is perfect if and only if it admits a balanced marking.
\end{theorem}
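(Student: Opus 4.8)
The plan is to deduce Theorem~\ref{thm:main} from two facts that together say that ``balanced'' is exactly the feasibility condition for a marking. First, every perfect tangle for $\pi$ induces a marking for $\pi$ in the technical sense of conditions (i)--(iii). Reading off $L$- and $R$-segments along each path and invoking Lemma~\ref{lm:invx}, an element is right (respectively left) precisely when its path carries an $R$-segment (respectively $L$-segment); since the tangle is perfect there is at most one of each, so $M(i)$ equals $L$, $R$, $LR$, $RL$, or $\emptyset$ exactly in the four classification cases. Second, I would prove the central lemma: a marking $M$ for $\pi$ arises from some perfect tangle if and only if $M$ is balanced. Granting these, the theorem is immediate, since $\pi$ is perfect if and only if it has a perfect tangle, if and only if it has a marking coming from a perfect tangle, if and only if it has a balanced marking.

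For the necessity half of the lemma (a marking read off a perfect tangle is balanced), I would argue geometrically using the rectangle formed by the four paths of a rec, as in Fig.~\ref{fig:recs}. For a regular rec $(a,b,c,d)$ one has value order $c<d<a<b$, and the four crossings $a\times c$, $a\times d$, $b\times c$, $b\times d$ are the corners of a topological rectangle. A short inversion computation shows that a left switchback $e$ crosses exactly the two boundary paths meeting at one designated corner (it is ``cut off'' there), while a right switchback crosses the two paths meeting at the opposite corner; the $RL$ versus $LR$ mark records on which side the excursion is made. Because each of the four boundary paths uses only a single $L$-segment and a single $R$-segment, the switchbacks cutting one corner must be matched by those cutting the opposite corner, and a continuity argument along the rectangle boundary forces the two counts to be equal, which is exactly balance. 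For an irregular rec, $a>b$ or $c>d$ makes two of the boundary paths cross each other, so the region twists; I would show that any switchback of such a rec would then be forced either to cross a boundary path twice (violating simplicity) or to acquire a second $L$- or $R$-segment (violating perfection), so that irregular recs must be empty.

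The sufficiency half -- given a balanced marking, construct a perfect tangle -- is the main obstacle. The shape of each path is already dictated by its mark ($\emptyset$ straight, $L$ or $R$ a single turn, $LR$ or $RL$ a single switchback), and the crossing count is forced to be $\inv(\pi)$; what remains is to choose the heights of all crossings consistently. The local content of balance is that around each individual rec the corner-cutting switchbacks can be ordered so that every path retains just one $L$- and one $R$-segment. The real difficulty is global: recs overlap and share paths, so these local orderings must be made simultaneously consistent. I would attempt this by an inductive ``peeling'' construction in the spirit of the proof of Theorem~\ref{thm:2corner}: repeatedly realize the topmost crossing of a carefully chosen element, drawn so as to extend an existing segment rather than create a new corner, show that the residual permutation still carries a balanced (suitably restricted) marking, and recurse. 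Establishing that such a peelable element always exists and that balance is preserved under the reduction is the technical heart of the argument; an alternative would be to encode the height assignment as an acyclicity condition on a partial order of crossings and to verify directly that balance makes this partial order realizable.
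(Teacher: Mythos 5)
Your first fact and your necessity argument are sound and essentially follow the paper's route: the paper proves necessity via Lemma~\ref{lm:type==sb}, which identifies the left (resp.\ right) switchbacks of a rec with the paths crossing the two left (resp.\ right) sides of the rectangle it forms, then counts crossings of opposite sides for regular recs and derives a double-crossing contradiction for irregular ones. The genuine gap is your ``central lemma'': it is \emph{false} that a marking $M$ for $\pi$ arises from some perfect tangle if and only if $M$ is balanced. Balance alone does not make a given marking realizable. Concretely, take $\pi=[4,2,3,1]$ with $M(4)=R$, $M(1)=L$, $M(2)=RL$, $M(3)=LR$. This permutation has no recs at all, so \emph{every} marking for it is vacuously balanced; yet no perfect tangle has this marking: path $2$ would have to begin with an R-segment and path $3$ with an L-segment, and since they occupy adjacent positions, whichever of them moves first must swap with the other, forcing paths $2$ and $3$ to cross even though $(2,3)$ is not an inversion --- contradicting simplicity. (The permutation is perfect, but only via markings such as $M(2)=LR$, $M(3)=RL$.)

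This is exactly the obstruction that your peeling induction hits: with a balanced but ``misaligned'' marking, the first swap dictated by the marking can pair two non-inverted elements, so there is no legal way to start the recursion while staying faithful to $M$. The paper's fix is an additional notion, an \emph{aligned} marking (whenever $\pi=[\dots ab\dots]$ with $M(a)=R\dots$ and $M(b)=L\dots$, the pair $(a,b)$ must be an inversion), together with a repair step: Lemma~\ref{lm:X} shows any balanced marking can be converted into an aligned balanced one by flipping the $RL/LR$ labels of offending adjacent pairs (necessarily of the form $b=a+1$, by Lemma~\ref{lm:b=a+1}), and checking that balance survives the flip. Only then does the induction (Theorem~\ref{thm:xc}) go through; it must maintain \emph{both} properties under the reduction (Lemmas~\ref{lm:M'bal} and~\ref{lm:M'align}), and it needs a further height lemma (Lemma~\ref{lm:height}) --- itself a consequence of the balance of one specific rec --- to show the two paths meeting at the new top cross have topmost swaps at equal heights, so the cross can be attached without creating extra corners. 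Your statement of the theorem survives, since it only asserts existence of \emph{some} balanced marking, but the proof route through your central lemma breaks at the sufficiency half, and the missing idea --- alignment and the repair of the marking --- is precisely the technical heart that your sketch defers.
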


The proof of Theorem~\ref{thm:main} is technical, and we postpone it to the
appendix.

Any permutation containing the pattern $[7324651]$ (for example) is not
perfect, since $4$ must be a switchback of one of the irregular recs $(7321)$
and $(7651)$. It follows by \cite{marcus-tardos,klazar} that the number of perfect
permutations in $S_n$ is at most $C^n$ for some $C>1$. Since direct tangles
are perfect it also follows from Theorem~\ref{thm:2corner} that the number is
at least $c^n$ for some $c>1$.

We note that Theorem~\ref{thm:main} already yields an algorithm for
determining whether a permutation is perfect in $\widetilde{O}(2^n)$
time\footnote{$\widetilde{O}$ hides a polynomial factor.}, by checking all
markings. In Section~\ref{sect:alg} we improve this to polynomial time.

%


\section{Recognizing Perfect Permutations}
\label{sect:alg}

We provide an algorithm for recognizing perfect permutations. The algorithm
finds a balanced marking for a permutation, or reports that such a marking
does not exist.  We start with a useful lemma.
\begin{lemma}
\label{lm:str}	Fix a permutation. For each right (resp., left) element $a$
there is a left-straight (right-straight) $b$ such that the pair $(a, b)$
(resp., ($b,a$)) is an inversion.
\end{lemma}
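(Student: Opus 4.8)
The plan is to prove the right-element statement and obtain the left-element statement by the evident reverse-complement symmetry. So fix a permutation $\pi$ and let $a$ be a right element. By definition there is some $j$ with $(a,j)$ an inversion, meaning $a>j$ and $\pi=[\dots a \dots j \dots]$. Consider the set $S=\{j : (a,j)\text{ is an inversion}\}$, which is nonempty since $a$ is right. The first observation is that every $j\in S$ is automatically a \df{left} element, because the inversion $(a,j)$ itself has the form $(k,j)$ with $k=a>j$, witnessing $j$ as left. Thus each candidate in $S$ is already left, and it remains only to locate one that is \emph{not} right, as that element will then be left-straight.

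The key step is an extremal choice together with transitivity of inversions. I would take $b\in S$ to be an element of \emph{smallest value}. Suppose for contradiction that $b$ is also a right element; then there is some $m$ with $(b,m)$ an inversion, i.e.\ $b>m$ and $\pi=[\dots b \dots m \dots]$. Combining the two inversions, $a$ precedes $b$ which precedes $m$, so $a$ precedes $m$; and $a>b>m$, so $a>m$. Hence $(a,m)$ is an inversion, giving $m\in S$. But $m<b$ contradicts the minimality of $b$'s value. Therefore $b$ is not right, so $b$ is left but not right, i.e.\ left-straight, and $(a,b)$ is an inversion as required.

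The left-element case is symmetric: given a left element $a$, I would consider $S'=\{j:(j,a)\text{ is an inversion}\}$ (nonempty, and every member is automatically right), choose $b\in S'$ of \emph{largest value}, and run the same transitivity argument. If such a $b$ were also left, an inversion $(k,b)$ would combine with $(b,a)$ to give $(k,a)$ an inversion with $k>b$, contradicting maximality; hence $b$ is right-straight and $(b,a)$ is an inversion.

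I do not expect a genuine obstacle here: the lemma is a short consequence of the transitivity of the inversion relation on values. The only point requiring care is matching the extremal choice to the direction of the argument—minimizing the value in the right case and maximizing it in the left case—so that the transitively produced witness violates the chosen extremum rather than lying harmlessly inside $S$ (or $S'$).
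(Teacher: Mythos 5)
Your proof is correct and follows essentially the same approach as the paper's: choose the minimal-value $b$ with $(a,b)$ an inversion, observe $b$ is left, and use transitivity of the inversion relation to contradict minimality if $b$ were also right. You merely spell out the transitivity step (and the symmetric left case) in more detail than the paper does.
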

\begin{proof}
We prove the case when $a$ is right, the other case being symmetrical. Consider
the minimal $b$ such that $(a, b)$ is an inversion. By definition, $b$ is
left. Suppose that it is also a right element, that is, $(b, c)$ is an
inversion for some $c < b$. It is easy to see that $(a, c)$ is an inversion
too, which contradicts to the minimality of $b$.
\end{proof}

Recall that a marking is balanced only if (in particular) every regular rec
of the permutation is balanced under the marking.  We first show that this is
guaranteed even by balancing of recs of a restricted kind. We call a rec $(a,
b, c, d)$ of a permutation $\pi$ \df{straight} if $a, b, c$, and $d$ are
straight elements of $\pi$. A marking for a permutation $\pi$ is called
\df{s-balanced} if every straight rec is balanced and every
irregular rec is empty under the marking.

\begin{lemma}
\label{lm:sb}
Let $M$ be a marking of a permutation $\pi$. Then $M$ is
balanced if and only if it is s-balanced.
\end{lemma}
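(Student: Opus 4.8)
The plan is to prove the two implications separately, with the forward direction being immediate and the reverse being the substance. For \emph{balanced} $\Rightarrow$ \emph{s-balanced}: a straight rec is in particular a rec, so if it is regular it is balanced by hypothesis, and if it is irregular it is empty by hypothesis (and an empty rec is trivially balanced, having no switchbacks of either kind). Since the condition that every irregular rec be empty is already part of being balanced, $M$ is s-balanced.

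For the reverse direction, fix an s-balanced marking $M$. As irregular recs are empty by assumption, it suffices to show every regular rec $\rho=(a,b,c,d)$ is balanced, and I would argue this by induction on the number of the four corner elements that are switchbacks. Every corner of a rec is a left or right element, so a non-switchback corner is automatically straight; hence the base case of zero switchback corners is exactly a straight rec, which is balanced by s-balancedness.

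For the inductive step, suppose some corner is a switchback; I would treat the top-left corner $a$, the others being analogous under the evident reflections. Since $a$ is a switchback it is a left element, so Lemma~\ref{lm:str} supplies a right-straight $a'$ with $(a',a)$ an inversion, i.e.\ $a'>a$ and $\pi^{-1}(a')<\pi^{-1}(a)$. The crux is that the single auxiliary \emph{difference rec} $(a',a,c,d)$ controls the entire effect of replacing $a$ by $a'$: its left switchbacks are exactly the ones gained as the top position-window widens from $(\pi^{-1}(a),\pi^{-1}(b))$ to $(\pi^{-1}(a'),\pi^{-1}(b))$, and its right switchbacks are exactly the ones gained as the top value-window widens from $(a,b)$ to $(a',b)$. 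Because $a'>a$, this difference rec is irregular, hence empty, so \emph{neither} switchback set of $\rho$ changes. If $a'<b$ then $(a',b,c,d)$ is again a regular rec with the same left- and right-switchback counts as $\rho$ but one fewer switchback corner, and I finish by induction; if instead $a'>b$ then $(a',b,c,d)$ is itself irregular and empty, which together with the emptiness of $(a',a,c,d)$ forces both switchback counts of $\rho$ to vanish, so $\rho$ is trivially balanced.

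I expect the main obstacle to be the two \emph{inner} corners $b$ (top-right) and $c$ (bottom-left), where the straight element returned by Lemma~\ref{lm:str} may be positioned outside the span of the remaining corners, so that the naive replacement is not a rec in the correct left-to-right order. The plan there is to absorb the degeneracy into a second irregular rec. For instance, when the right-straight replacement $b'$ for $b$ satisfies $\pi^{-1}(b')<\pi^{-1}(a)$, the irregular empty recs $(b',b,c,d)$ and $(b',a,c,d)$ certify respectively that $\rho$ has no left switchbacks and no right switchbacks, so $\rho$ is balanced outright; the non-degenerate subcase reduces as before. Once this position-and-value bookkeeping is carried out symmetrically for all four corners, the induction goes through and shows every regular rec is balanced, completing the reverse implication.
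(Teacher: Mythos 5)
Your proposal is correct and takes essentially the same approach as the paper: both prove the nontrivial direction by replacing each non-straight corner of a regular rec with a straight element supplied by Lemma~\ref{lm:str}, and both use the required emptiness of irregular recs (your ``difference recs'') to show that the replacement leaves the left- and right-switchback sets unchanged. The difference is mainly presentational: you organize this as an explicit induction on the number of switchback corners and spell out the degenerate subcases (the replacement element landing outside the rec's positional span, or the new rec being irregular), which the paper's case analysis passes over implicitly.
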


\begin{proof}
The ``if'' direction is immediate, so we turn to the converse.  Let $M$ be an
s-balanced marking and $\rho = (a, b, c, d)$ be a regular rec of $\pi$. We
need to prove that $\rho$ is balanced under $M$. If $\rho$ is straight then
$\rho$ is balanced by definition. Let us suppose that $\rho$ is not straight.
Then some $u\in \{a,b,c,d\}$ is not a straight element. Our goal is to show
that it is possible to find a new rec $\rho'$ in which $u$ is replaced with a
straight element so that the sets of left and right switchbacks of $\rho$ and
$\rho'$ coincide. By symmetry, we need only consider the cases $u=a$ and
$u=b$.

{\em Case $u=a$}. Let us suppose that $a$ is not straight. By
Lemma~\ref{lm:str}, there exists a right straight $e$ such that $(e, a)$ is
an inversion. Let us denote $\rho' = (e, b, c, d)$ and show that $\rho'$ has
the same switchbacks as $\rho$. Let $k$ be a left switchback of $\rho$; then
$M(k)=RL$, and $\pi=[\dots e \dots a \dots k \dots b \dots]$, and $c < k <
d$. By definition $k$ is a left switchback of $\rho'$. Let $k$ be a left
switchback of $\rho'$. If $\pi = [\dots e \dots k \dots a \dots b \dots]$
then the irregular rec $(e, a, c, d)$ has a left switchback, which is
impossible. Therefore, $\pi=[\dots e \dots a \dots k \dots b \dots]$ and $k$
is a left-switchback of $\rho$.

Let us suppose that $k$ is a right switchback of $\rho$, so $a < k < b$. If
$k < e$ then $k$ is a right switchback of the irregular $(e, a, c, d)$;
hence, $e < k < b$ and $k$ is a right switchback of $\rho'$. On the other
hand, if $k$ is a right switchback of $\rho'$ then $a < e < k < b$, which
means that $k$ is a right switchback of $\rho$.

{\em Case $u=b$}. Let us suppose that $b$ is not straight. By
Lemma~\ref{lm:str}, there exists a right straight $e$ such that $(e, b)$ is
an inversion. Let us denote $\rho' = (a, e, c, d)$ and show that $\rho'$ has
the same switchbacks as $\rho$.  Let $k$ be a left switchback of $\rho$. We
have $\pi=[\dots a \dots k \dots b \dots]$. Since $k$ is not a left
switchback of the irregular rec $(e, b, c, d)$, we have $\pi=[\dots a \dots k
\dots e \dots]$. Therefore, $k$ is a left switchback of $\rho'$.

Let $k$ be a right switchback of $\rho$. Then $a<k<b<e$, proving that $k$ is
a right switchback of $\rho'$. Let $k$ be a right switchback of $\rho'$. If
$b<k$ then $k$ is a right switchback of $(e, b, c, d)$, which is impossible.
Then $k<b$ and $k$ is a right switchback of $\rho$.
\end{proof}

We can restrict the set of recs guaranteeing the balancing of a permutation
even further. We call a pair $a, b$ of elements \df{right} (resp.\ \df{left})
\df{minimal} if $a$ and $b$ are right (left) straight elements of $\pi$, and
$a < b$, and there is no right (left) straight element $c$ such that
$\pi=[\dots a \dots c \dots b \dots]$. We call rec $\rho = (a,b,c,d)$
\df{minimal} in $\pi$ if $a, b$ is a right minimal pair and $c, d$ is a left
minimal pair; see Fig.~\ref{fig:algo1}. We call a marking for a permutation
\df{ms-balanced} if every minimal regular rec is balanced and every irregular
rec is empty under the marking.

\begin{lemma}
\label{lm:msb} Let $M$ be a marking of a permutation $\pi$. Then $M$ is
s-balanced if and only if it is ms-balanced.
\end{lemma}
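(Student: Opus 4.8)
The plan is to follow the pattern of Lemma~\ref{lm:sb}: one implication is immediate and the content is in the other. By definition a minimal rec has $a<b$ and $c<d$, so it is regular, and its four elements are right- or left-straight, so it is a straight rec. Hence if $M$ is s-balanced then every minimal (regular) rec, being a straight rec, is balanced, and every irregular rec is empty; so $M$ is ms-balanced. The substantive direction is the converse: assuming every minimal rec is balanced and every irregular rec is empty, I must show every straight rec is balanced.

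First I would record a basic fact: among right-straight elements, and symmetrically among left-straight elements, positional order agrees with value order. Indeed, if $x,y$ are both right-straight with $x$ before $y$ but $x>y$, then $(x,y)$ is an inversion, making $y$ a left element and contradicting that $y$ is right-straight; the left-straight case is symmetric. Consequently a right-minimal pair is exactly a pair of positionally consecutive right-straight elements, and likewise on the left. Now fix a straight rec $\rho=(a,b,c,d)$. If $\rho$ is irregular it is empty, hence balanced, so assume it is regular, with $a<b$, $c<d$, and $a>d$; since in any rec the first two entries are right elements and the last two are left, $a,b$ are right-straight and $c,d$ are left-straight. Let $a=r_0<r_1<\dots<r_p=b$ list all right-straight elements in positions from $a$ to $b$, and $c=\ell_0<\ell_1<\dots<\ell_q=d$ all left-straight elements in positions from $c$ to $d$. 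Each $\rho_{t,u}:=(r_t,r_{t+1},\ell_u,\ell_{u+1})$ is then a minimal rec: consecutive $r$'s and $\ell$'s are minimal pairs, the $r$'s precede the $\ell$'s positionally, and the rectangle inequality holds because $r_t\ge a>d\ge\ell_{u+1}$.

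The crux is a switchback bijection. A left switchback $k$ of $\rho$ is marked $RL$, lies positionally strictly between $a$ and $b$, and has value strictly between $c$ and $d$; being a switchback it equals none of the right-straight $r_t$ nor the left-straight $\ell_u$, so it falls into a unique positional gap $(r_t,r_{t+1})$ and a unique value gap $(\ell_u,\ell_{u+1})$, which makes it a left switchback of exactly one $\rho_{t,u}$. Conversely every left switchback of any $\rho_{t,u}$ is a left switchback of $\rho$. The identical argument handles right switchbacks. Summing over $t$ and $u$, the number of left (resp.\ right) switchbacks of $\rho$ equals the total number over all $\rho_{t,u}$; since each $\rho_{t,u}$ is balanced by hypothesis, these two totals agree, so $\rho$ is balanced.

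The main obstacle is making the switchback bijection exactly right: I must verify that each switchback of $\rho$ lands in a unique sub-rec (which uses that a switchback is neither right- nor left-straight, hence coincides with no $r_t$ or $\ell_u$) and that each $\rho_{t,u}$ is a genuine minimal rec, where the value inequality $r_t>\ell_{u+1}$ is exactly the point at which regularity $a>d$ of $\rho$ is used. The remaining verifications are routine bookkeeping.
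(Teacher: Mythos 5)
Your proof is correct and follows essentially the same route as the paper's: decompose a regular straight rec $\rho=(a,b,c,d)$ into the grid of minimal recs $(r_t,r_{t+1},\ell_u,\ell_{u+1})$ formed by consecutive right-straights and left-straights, show the left (resp.\ right) switchbacks of $\rho$ partition among these sub-recs, and sum the balance equalities. Your write-up merely makes explicit some points the paper leaves implicit, namely the order-agreement fact underlying disjointness of the sub-recs' switchback sets and the trivial case of an irregular straight rec.
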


Before giving the proof, we introduce some further notation. Let $\rho = (a,
b, c, d)$ be an arbitrary, possibly irregular, rec in $\pi$. Let us denote by
$\rho_\ell$ (resp.~$\rho_r)$ the set of switchbacks $i$ that can under some
marking be left (resp., right) switchbacks of $\rho$. Formally, $i\in
\rho_\ell$ if and only if $\pi = [\dots a \dots i \dots b \dots c \dots d
\dots]$ and either $c<i<d$ or $d<i<c$. (And $\rho_r$ is defined
symmetrically.) For a rec $\rho$ and marking $M$ let $\rho^M_\ell$
($\rho^M_r$) be the set of left (respectively, right) switchbacks of $\rho$
under $M$. Of course, $\rho^M_\ell \subseteq \rho_\ell$ and $\rho^M_r
\subseteq \rho_r$. It is easy to see from the definition that for two
different minimal recs $\rho$ and $\rho'$ we have $\rho_\ell \cap \rho'_\ell=
\emptyset$ and $\rho_r \cap \rho'_r=\emptyset$.

\begin{proof}[Lemma~\ref{lm:msb}]
It suffices to prove that if $M$ is ms-balanced then it is s-balanced.
Consider a straight rec $\rho = (a, b, c, d)$. Let $a=r_1, \dots, r_p=b$ be a
sequence of right straights in which each consecutive pair $r_i, r_{i+1}$ is
right minimal.  Define left straights $c=\ell_1, \dots, \ell_q=d$ similarly.
Let $D$ be the set of all recs of the form $(r_i, r_{i+1}, \ell_j,
\ell_{j+1})$ for $1\le i<p$ and $1\le j<q$. Notice that all recs of $D$ are
minimal. By definition of rec switchbacks, we have $ \rho^M_\ell =
\bigcup_{u\in D} u^M_\ell$ and $\rho^M_r = \bigcup_{u\in D} u^M_r$. Since
every rec $u \in D$ is balanced and for every pair $u,v \in D$ of different
recs $u^M_\ell \cap v^M_\ell= u^M_r \cap v^M_r=\emptyset$, we have
$|\rho^M_\ell| = |\rho^M_r|$; that is, $\rho$ is balanced under $M$.
\end{proof}

\newcommand{\RR}{\mathfrak{R}}
\newcommand{\LL}{\mathfrak{L}}
\newcommand{\II}{\mathfrak{I}}

Let us show how to construct an ms-balanced marking. For a permutation $\pi$,
let $\II_\ell=\bigcup \{ \rho_\ell : \rho \text{ is an irregular rec in } \pi
\}$ and $\RR_\ell= \bigcup \{\rho_\ell :\rho \text{ is a regular rec in
}\pi\}$, and define $\II_r,\RR_r$ similarly.
Our algorithm\footnote{The demo is at
\url{https://sites.google.com/site/gdcgames/nicetangle}} is based on finding
a maximum vertex-weighted matching, which can be done in polynomial
time~\cite{Spencer84}.

\begin{figure}[t]
    \center
    \subfigure[]{
    \includegraphics[width=5cm]{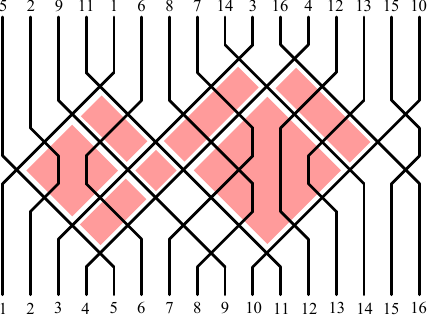}
    \label{fig:algo1}}
\qquad 
	\subfigure[]{
	\includegraphics[width=4cm]{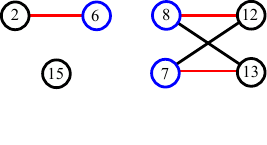}
    \label{fig:algo2}}
    \caption{(a)~A perfect tangle for a permutation with 7 minimal straight recs (shown red).
    (b)~The graph constructed in \emph{Step 3} of our algorithm.
    Here, $\II_\ell = \emptyset$, $\II_r = \{6\}$, $\RR_\ell = \{6, 7, 8, 12, 13, 15\}$, and $\RR_r = \{2, 7, 8\}$.
    The vertices of the set $F = \{6, 7, 8\}$ are shown blue. The red edges are the computed maximum matching.}
    \label{fig:algo}
\end{figure}

The algorithm (illustrated in Fig.~\ref{fig:algo}) inputs
a permutation $\pi$ and computes an ms-balanced marking $M$ for $\pi$ or
determines that such a marking does not exist. Initially, $M(i)$ is undefined
for every $i \in \{1,\dots, n\}$. The algorithm has the following steps.

\paragraph{\emph{Step 1}:} For every element $1\le i \le n$ that is neither
    left nor right, set $M(i) = \emptyset$. For every left straight $i$
    set $M(i) = L$. For every right straight $i$ set $M(i) = R$.

\paragraph{\emph{Step 2}:} If $\II_\ell \cap \II_r \neq \emptyset$  then
    report that $\pi$ is not perfect and stop. Otherwise, for every
    switchback $i \in \II_\ell$ set $M(i) = LR$; for every switchback $i
    \in \II_r$ set $M(i) = RL$.


\paragraph{\emph{Step 3.1}:}
Build a directed graph $G = (V, E)$ with $V = \RR_\ell
    \cup \RR_r$ and $E = \bigcup \{(\rho_\ell\setminus \II_\ell) \times
    (\rho_r\setminus \II_r) : \rho \text{ is a minimal rec in } \pi \}$.

\paragraph{\emph{Step 3.2}:}
Create a set $F \leftarrow (\RR_\ell \cap \RR_r)
    \cup (\II_\ell \cap \RR_r) \cup (\II_r \cap \RR_\ell)$. Create
    weights $w$ for vertices of $G$: if $i \in F$ then set $w(i)=1$,
    otherwise set $w(i)=0$.

\paragraph{\emph{Step 4}:}
Compute a maximum vertex-weighted matching $U$ on
    $G$ (viewed as an {\em unoriented} graph, ignoring the directions of edges) using weights $w$. If the total weight of $U$ is less than $|F|$
    then report that $\pi$ is not perfect and stop.

\paragraph{\emph{Step 5.1}:}
Assign marking based on the matching: for every
    edge $(i, j) \in U$ set $M(i) = RL$ provided $M(i) $ has not already been assigned,
	and $M(j) = LR$ provided $M(j) $ has not already been assigned.

\paragraph{\emph{Step 5.1}:}
For every switchback $1\le i \le n$ with still
    undefined marking, if $i \in \RR_\ell$ then set $M(i) = LR$, if $i \in \RR_r$ then set $M(i) = RL$,
	otherwise choose $M(i)$ to be $LR$ or $RL$ arbitrarily.  (Note that any $i \in \RR_\ell\cap \RR_r$
	was already assigned because of Steps 3.2 and 4.)

\vspace{2mm}
Let us prove the correctness of the algorithm.

\begin{lemma}
\label{lm:albal}
If the algorithm produces a marking then the marking is ms-balanced.
\end{lemma}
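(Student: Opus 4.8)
The plan is to verify the two clauses in the definition of ms-balanced separately: (A) every irregular rec is empty under the produced marking $M$, and (B) every minimal regular rec is balanced under $M$. Throughout I assume the algorithm did not abort in Steps~2 or~4, since otherwise it produces no marking.

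For (A), let $\rho$ be an irregular rec. Any left switchback $i$ of $\rho$ lies in $\rho_\ell\subseteq\II_\ell$ and has $M(i)=RL$. But Step~2 sets $M(i)=LR$ for every $i\in\II_\ell$ (this is well defined because Step~2 did not abort, so $\II_\ell\cap\II_r=\emptyset$), and this value is never changed afterwards: the first part of Step~5 writes only to unassigned elements, and the second part only to still-undefined ones. Hence $\rho$ has no left switchback, and symmetrically (using $\rho_r\subseteq\II_r$, forced to $RL$) no right switchback, so $\rho$ is empty.

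For (B), fix a minimal regular rec $\rho$. I first pin down the final marking of each switchback in $\rho_\ell\cup\rho_r$, using that Step~2 fixes $M=LR$ on $\II_\ell$ and $M=RL$ on $\II_r$, that the first part of Step~5 sets the left endpoint of each matched edge to $RL$ and the right endpoint to $LR$, and that the second part pushes any leftover element of $\RR_\ell$ to $LR$ and of $\RR_r$ to $RL$. This gives: for $i\in\rho_\ell$ (so $i\in\RR_\ell$) one has $M(i)=RL$ if and only if $i\in\II_r$, or $i$ is \emph{free} ($i\notin\II_\ell\cup\II_r$) and matched as a left endpoint; symmetrically for $\rho_r$. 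The crucial ingredient here is the disjointness observation preceding Lemma~\ref{lm:msb}: since $\rho_\ell\cap\rho'_\ell=\emptyset$ for distinct minimal recs, a free $i\in\rho_\ell$ matched as a left endpoint must be matched \emph{inside} $\rho$, because its edge comes from the unique minimal rec owning $i$. Writing $P=\rho_\ell\cap\II_r$ and $Q=\rho_r\cap\II_\ell$ for the forced switchbacks, and classifying the matched edges internal to $\rho$ by whether each endpoint is forced (in $P$, resp.\ $Q$) or free, with counts $n_{\mathrm{ff}},n_{\mathrm{fQ}},n_{\mathrm{Pf}},n_{\mathrm{PQ}}$ for types $(\mathrm{free},\mathrm{free})$, $(\mathrm{free},Q)$, $(P,\mathrm{free})$, $(P,Q)$, one obtains
$$|\rho^M_\ell|=|P|+n_{\mathrm{ff}}+n_{\mathrm{fQ}},\qquad |\rho^M_r|=|Q|+n_{\mathrm{ff}}+n_{\mathrm{Pf}}.$$

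The balance $|\rho^M_\ell|=|\rho^M_r|$ then reduces to $|P|+n_{\mathrm{fQ}}=|Q|+n_{\mathrm{Pf}}$, and this is exactly where the saturation of $F$ enters. Since $P\subseteq\RR_\ell\cap\II_r\subseteq F$ and $Q\subseteq\RR_r\cap\II_\ell\subseteq F$, and a produced matching $U$ has total weight $|F|$ by Step~4 (hence saturates $F$), every vertex of $P$ and of $Q$ is matched. A vertex of $P$ lies in $\II_r$ (so it is excluded from all right endpoints) and in $\rho_\ell$ (so by disjointness it is a left-endpoint candidate only for $\rho$); its matching edge is therefore an internal edge of $\rho$ of type $(P,\mathrm{free})$ or $(P,Q)$, giving $n_{\mathrm{Pf}}+n_{\mathrm{PQ}}=|P|$. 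Symmetrically $n_{\mathrm{fQ}}+n_{\mathrm{PQ}}=|Q|$. Substituting yields $|P|+n_{\mathrm{fQ}}=|Q|+n_{\mathrm{Pf}}$, so $\rho$ is balanced, completing (B).

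The step I expect to be the main obstacle is the bookkeeping in (B): correctly determining each switchback's final marking through the interacting assignment rules of Steps~2 and~5 (in particular that Step-2 values are never overwritten, and that leftover free vertices of $\RR_\ell$, resp.\ $\RR_r$, are forced to $LR$, resp.\ $RL$), and then localizing matched edges to individual recs via the disjointness property so that the forced counts $|P|,|Q|$ and the internal-edge counts cancel. The actual balance is then an immediate consequence of $F$ being saturated.
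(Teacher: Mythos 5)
Your proof is correct and rests on exactly the same ingredients as the paper's: Step-2 forcing on $\II_\ell,\II_r$ (never overwritten) empties the irregular recs, and for each minimal regular rec the disjointness $\rho_\ell\cap\rho'_\ell=\emptyset$ across minimal recs plus the saturation of $F$ ties its actual switchbacks to matched edges internal to $\rho$. The only difference is presentational: the paper states the conclusion as a direct bijection (every $i\in\rho^M_\ell$ is matched to some $k\in\rho^M_r$ and conversely), while your four-type edge count derives the same equality — both $|\rho^M_\ell|$ and $|\rho^M_r|$ equal the number of matched edges internal to $\rho$.
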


\begin{proof}
Let $M$ be a marking produced by the algorithm for a permutation $\pi$.
It is easy to see that $M(i)$ is defined for all $1\le i \le n$
(in \emph{Step 1} for straights and in \emph{Step 2} and \emph{Step 5} for switchbacks).
By construction, $M$ is a marking for $\pi$.

Let us show that $M$ is ms-balanced. Consider an irregular rec $\rho$ of $\pi$,
and suppose that $i \in \rho_\ell$. Since $\rho_\ell \subseteq \II_\ell$, in
\emph{Step 2} we assign $M(i)=LR$, that is, $i\not \in \rho^M_\ell$.
Therefore, $\rho$ does not have left switchbacks under $M$. Similarly, $\rho$
does not have right switchbacks under $M$. Therefore, $\rho$ is empty.

Consider a regular minimal straight rec $\rho$ in $\pi$. Suppose that $i \in
\rho^M_\ell$. Then $M(i)=RL$ and $i\in \rho_\ell \subseteq \RR_\ell$. If $i
\in \II_r$ then $i\in \RR_\ell \cap \II_r \subseteq F$; hence $i$ is incident
to an edge in $U$. Since no directed edge of the form $(k, i)$ is included in $G$ in \emph{Step
3.1}, there exists $(i, k)\in U$ for some $k$. On the other hand, if $i \not
\in \II_r$ then string $RL$ was not assigned to $M(i)$ in \emph{Step 5.2}, nor in \emph{Step 2}.
Thus, it was assigned in \emph{Step 5.1}, and again $(i, k)\in U$ for some
$k$. By definition of $E$ we have $k\in \rho_r$, because $k$ cannot appear in $\rho'_r$ for
any other minimal $\rho'\neq \rho$.
The algorithm sets $M(k)=LR$ at \emph{Step 5.2}; it could not have previously set $M(k)=LR$
at \emph{Step 2} because $k\notin \II_r$ by the definition of $E$.
Thus $k \in \rho^M_r$.

By symmetry, an identical argument to the above shows that if $k \in \rho^M_r$ then $i \in
\rho^M_\ell$ for some $i$ satisfying $(i, k)\in U$.
Since $U$ is a matching,
we thus have a bijection between elements of
$\rho^M_\ell$ and $\rho^M_r$. Therefore, $\rho$ is balanced under $M$.
\end{proof}

\begin{lemma}
\label{lm:20}
Let $\pi$ be a perfect permutation. The algorithm produces
a marking for $\pi$.
\end{lemma}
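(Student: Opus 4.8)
The plan is to use the balanced marking guaranteed by perfectness as a certificate that neither abort condition of the algorithm (in \emph{Step 2} and in \emph{Step 4}) can trigger. Since $\pi$ is perfect, Theorem~\ref{thm:main} supplies a balanced marking $M^*$, and by Lemmas~\ref{lm:sb} and~\ref{lm:msb} this $M^*$ is ms-balanced: every minimal regular rec is balanced under $M^*$, and every irregular rec is empty under $M^*$. I will read off from $M^*$ exactly the information the algorithm tests.

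First I record the forcing consequence of emptiness of irregular recs. If $i\in\II_\ell$, then $i\in\rho_\ell$ for some irregular rec $\rho$; were $M^*(i)=RL$ we would have $i\in\rho^{M^*}_\ell$, contradicting emptiness of $\rho$, so $M^*(i)=LR$. Symmetrically $i\in\II_r$ forces $M^*(i)=RL$. In particular $\II_\ell\cap\II_r=\emptyset$, so the algorithm does not abort in \emph{Step 2}. These two implications will also be used repeatedly below.

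The substance is \emph{Step 4}, where I must exhibit a matching of $G$ of total weight $|F|$, equivalently one covering every vertex of $F$; the computed maximum matching $U$ will then also have weight $|F|$ (it is always at most $|F|$ since only $F$ carries weight $1$). I build such a matching $U^*$ directly from $M^*$. Each switchback $i$ has a fixed role under $M^*$: if $M^*(i)=RL$ it can act only as a left switchback, and if $M^*(i)=LR$ only as a right switchback. For every minimal regular rec $\tau$ I use balancedness, $|\tau^{M^*}_\ell|=|\tau^{M^*}_r|$, to fix an arbitrary bijection $\tau^{M^*}_\ell\to\tau^{M^*}_r$, and put the resulting pairs into $U^*$ as directed edges from the left switchback to the right switchback. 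Each such pair is an edge of $G$: a left endpoint $i$ has $M^*(i)=RL$, so $i\notin\II_\ell$ by the forcing above, whence $i\in\tau_\ell\setminus\II_\ell$, and symmetrically the right endpoint lies in $\tau_r\setminus\II_r$. Moreover $U^*$ is a genuine matching: distinct minimal recs have disjoint $\rho_\ell$ and disjoint $\rho_r$, so a vertex is a left endpoint of at most one edge and a right endpoint of at most one edge, while its fixed role under $M^*$ prevents it from being both.

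It remains to check that $U^*$ covers $F$, and the one nontrivial ingredient is the passage from the global sets $\RR_\ell,\RR_r$ (taken over all regular recs) to the minimal recs used to build $U^*$. Combining the straightening of Lemma~\ref{lm:sb} with the decomposition of Lemma~\ref{lm:msb} at the level of the marked switchback sets gives, for every regular rec $\rho$, that $\rho^{M^*}_\ell$ is the union of $\tau^{M^*}_\ell$ over the minimal regular recs $\tau$ in its decomposition, and likewise for $\rho^{M^*}_r$. Consequently, if $M^*(i)=RL$ and $i\in\RR_\ell$ then $i$ is a left endpoint of an edge of $U^*$, and if $M^*(i)=LR$ and $i\in\RR_r$ then $i$ is a right endpoint. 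Now take $i\in F$: if $i\in\II_\ell\cap\RR_r$ the forcing gives $M^*(i)=LR$ and membership in $\RR_r$ covers $i$; if $i\in\II_r\cap\RR_\ell$ then $M^*(i)=RL$ and membership in $\RR_\ell$ covers $i$; and if $i\in\RR_\ell\cap\RR_r$ then whichever of $LR,RL$ the value $M^*(i)$ takes, the matching $\RR$-membership covers $i$. Hence $U^*$ covers all of $F$, the maximum matching has weight $|F|$, and the algorithm does not abort in \emph{Step 4}. Passing both abort tests, it outputs a marking, as claimed. I expect the main obstacle to be exactly this coverage step: reducing global regular recs to minimal ones via Lemmas~\ref{lm:sb} and~\ref{lm:msb} at the level of switchback sets, and tracking how the three intersections defining $F$ interact with the forced value of $M^*(i)$.
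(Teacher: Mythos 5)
Your proof is correct and follows essentially the same route as the paper's: take the balanced (hence ms-balanced) marking $M^*$ guaranteed by Theorem~\ref{thm:main}, observe that emptiness of irregular recs rules out the \emph{Step 2} abort, and build a certificate matching from arbitrary bijections $\tau^{M^*}_\ell \to \tau^{M^*}_r$ on minimal recs, then verify it covers $F$ so the \emph{Step 4} abort cannot trigger. The only difference is that you spell out several points the paper leaves implicit (the forcing $\II_\ell \Rightarrow M^*(i)=LR$, the fact that the certificate edges really lie in $E$, and the reduction from arbitrary regular recs to minimal ones via the switchback-set decompositions inside Lemmas~\ref{lm:sb} and~\ref{lm:msb}), which is a faithful elaboration rather than a different argument.
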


\begin{proof}
Since $\pi$ is perfect, there is a balanced marking $M$ for $\pi$.
Since $M$ is balanced, all irregular recs are empty under $M$; hence, the algorithm
does not stop in \emph{Step 2}.
To prove the claim, we will create a matching in the graph $G$ with total weight $|F|$.

Let $\rho$ be a minimal rec in $\pi$. Since $\rho$ is balanced under $M$, we
have $|\rho^{M}_\ell| = |\rho^{M}_r|$. Hence, let $W_{\rho}$ be an arbitrary
matching connecting vertices of $|\rho^{M}_\ell|$ with vertices of
$|\rho^{M}_r|$. Of course, $|W_{\rho}| = |\rho^{M}_\ell|$. Let $W = \bigcup
\{W_{\rho} : \rho \text{ is a minimal rec in } \pi\}$. We show that every
element of set $F$ is incident to an edge of $W$.

Suppose $i \in \RR_\ell \cap \RR_r$. Since $i$ is a switchback in $\pi$, we
have $M(i)=RL$ or $M(i)=LR$. In the first case $i \in \rho^{M}_\ell$ and in
the second case $i \in \rho^{M}_r$ for some minimal rec $\rho$. Then $i$ is
incident to an edge from $W_{\rho}$.

Suppose $i \in F \setminus \{\RR_\ell \cap \RR_r\}$. Without loss of
generality, let $i \in \II_\ell \cap \RR_r$. Since $M$ is balanced, every
irregular rec has no switchbacks and hence $M(i)=LR$. Thus, $i \in
\rho^{M}_r$ for some minimal rec $\rho$, and $i$ is incident to an edge of
$W_{\rho}$.

Therefore, every vertex of $F$ is incident to an edge of the matching $W$, which
means that the total weight of $W$ is $|F|$.
\end{proof}

Theorem~\ref{thm:perfect} follows directly from Lemmas~\ref{lm:albal}
and~\ref{lm:20} and Theorem~\ref{thm:main}.

\section{Conclusion}
\label{sect:conclude} In this paper we give algorithms for producing optimal
tangles in the special cases of direct and perfect tangles, and for
recognizing permutations for which this is possible.  The following questions
remain open.  (i) What is the complexity of determining the tangle with
minimum corners for a given permutation?  (ii) What is the complexity if the
tangle is required to be simple?  (iii) What is the asymptotic behavior of
the maximum over permutations $\pi\in S_n$ of the minimum number of corners
among simple tangles solving $\pi$?



\paragraph{Acknowledgements.} We thank Omer Angel, Franz Brandenburg, David Eppstein, Martin
Fink, Michael Kaufmann, Peter Winkler, and Alexander Wolff for fruitful
discussions about variants of the problem.

\bibliographystyle{abbrv}
\bibliography{nice-tangles}

\newpage
\appendix

\begin{figure}[htb]
\begin{center}
\includegraphics[scale=.4]{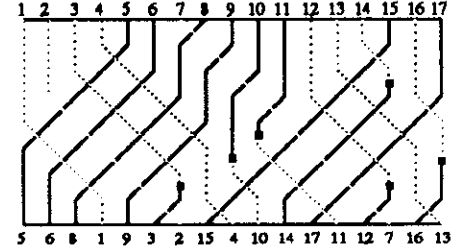}
\end{center}
\caption{An example of a channel routing from \cite[Figure 9]{w-nrsil-91}.}
\label{fig:wang}
\end{figure}

\section{Proof of ``if'' direction of Theorem~\ref{thm:main}}

We introduce the following additional definition. We say that a marking $M$
for $\pi$ is \df{aligned} with $\pi$ if, for every pair of elements $a, b$
with $\pi=[\dots ab\dots]$ and $M(a) = R\dots$ and $M(b) = L\dots$, we have
that $(a,b)$ is an inversion.

\begin{lemma}
\label{lm:X} Let $M$ be a balanced marking for $\pi \in S_n$. There exists an
aligned balanced marking $M'$ for $\pi$.
\end{lemma}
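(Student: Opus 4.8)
The plan is to recast alignment as the absence of a purely local defect and then to remove such defects while staying inside the class of balanced markings. Call an adjacent pair $a,b$ with $\pi=[\dots ab\dots]$ a \emph{bad pair} for a marking $N$ if $N(a)=R\dots$ and $N(b)=L\dots$ but $(a,b)$ is not an inversion, i.e.\ $a<b$; thus a marking is aligned exactly when it has no bad pair. The first step is a structural lemma that pins bad pairs down completely: in any bad pair \emph{both} $a$ and $b$ are switchbacks, with $N(a)=RL$ and $N(b)=LR$, and there exist elements $c<a$ and $d>b$ with $\pi=[\dots d\dots ab\dots c\dots]$. Indeed, $N(a)=R\dots$ forces $a$ to be a right element, so some inversion $(a,c)$ with $c<a$ exists; since $a$ immediately precedes $b$ and $c<a<b$, the element $c$ must lie after $b$, whence $(b,c)$ is an inversion and $b$ is also right, so $N(b)=LR$. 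Symmetrically $N(b)=L\dots$ makes $b$ left, yielding $d>b$ occurring before $b$ and hence before $a$, so $(d,a)$ is an inversion and $a$ is left, giving $N(a)=RL$. This is the clean part of the argument, and its payoff is that defects only ever involve the \emph{free} part of a marking, namely the $LR/RL$ choice at switchbacks, which is precisely what I am permitted to alter.

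Next I would eliminate bad pairs by flipping switchbacks. By Lemmas~\ref{lm:sb} and~\ref{lm:msb} balancedness is equivalent to ms-balancedness, so under any flip I need only preserve two conditions: every minimal regular rec stays balanced, and every irregular rec stays empty. The natural move on a bad pair is to flip its left member $a$ from $RL$ to $LR$, which turns the first letter at $a$'s position from $R$ into $L$ and so destroys that bad pair. A single such flip disturbs balance only through the at most one minimal rec whose $\rho_\ell$ contains $a$ and the at most one whose $\rho_r$ comes to contain $a$; here I use the disjointness noted just before the proof of Lemma~\ref{lm:msb}, that distinct minimal recs have disjoint sets $\rho_\ell$ and disjoint sets $\rho_r$. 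I would then restore balance by a compensating flip inside each disturbed rec and follow the resulting alternating chain, absorbing the imbalance either at a switchback that is free or against one rigidly fixed by an irregular rec. The cleanest packaging is an extremal one: among all balanced markings for $\pi$, choose $M'$ that is lexicographically minimal in the left-to-right sequence of first letters, ordered by $L<\emptyset<R$. If $M'$ had a bad pair at positions $p,p+1$, the flip-and-repair above should produce a balanced marking whose first change from $M'$ is the decrease $R\to L$ at position $p$, contradicting minimality; the resulting $M'$ is then aligned and balanced, as required.

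The main obstacle is exactly the repair step: one must show that the balance destroyed by flipping $a$ can always be restored by compensating flips that (i) exist, (ii) terminate, and (iii) leave the \emph{leftmost} altered first letter as a genuine decrease, so that no new bad pair is introduced at or before position $p$. For existence and termination of the alternating chain I would rely on the minimal-rec incidence structure underlying Theorem~\ref{thm:main} together with the fact that irregular recs fix their switchbacks rigidly, preventing the chain from cycling. For controlling direction I expect the witnesses $c$ (which lies after $b$) and $d$ to determine on which side of $p$ compensation propagates: a rec with $a\in\rho_\ell$ forces its compensating right switchback into the rec's lower, right-hand region, hence to positions beyond $p$. The final ingredient is the positional interchangeability of $a$ and $b$: since they occupy adjacent positions, every rec built from other elements sees $a$ and $b$ inside the same position interval, so their switchback roles differ only through value comparisons, which I can track explicitly from $c<a<b<d$. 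Assembling these controls on the chain is where the real work lies; once they are in place, the lexicographically minimal balanced marking has no bad pair and is the desired aligned balanced marking $M'$.
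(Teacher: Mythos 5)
There is a genuine gap, and it is exactly the step you flag as ``where the real work lies'': the flip-and-repair chain is never constructed, and the directional control it needs is in fact doubtful. Flipping only $a$ from $RL$ to $LR$ disturbs recs of \emph{two} kinds: minimal recs $\rho$ with $a\in\rho_\ell$ (which lose a left switchback) and minimal recs $\rho'$ with $a\in\rho'_r$ (which \emph{gain} a right switchback, since an element of $\rho'_r$ marked $LR$ is by definition a right switchback). Your positional claim only addresses the first kind. For the second kind the repair goes the wrong way: the left-switchback candidates of $\rho'$ all lie at positions \emph{before} $a$ (they sit between the first two members of $\rho'$, which precede the interval containing $a$), so restoring balance there either flips an element before position $p$ from $LR$ to $RL$ --- an $L\to R$ increase in first letter occurring before your claimed decrease --- or removes another right switchback of $\rho'$ at an uncontrolled position, again an $L\to R$ change. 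In either case the new marking need not be lexicographically smaller, so the extremal contradiction does not go through as stated. You would also need to rule out that a flip makes an element a switchback of an \emph{irregular} rec, which is irreparable since irregular recs must be empty, not merely balanced.

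What is missing is the one fact that makes all of this machinery unnecessary, and it is precisely the part of your structural lemma that you did not prove: in a \emph{balanced} marking, a bad pair $a,b$ must satisfy $b=a+1$. (Your lemma uses only the definition of left/right elements; this stronger conclusion genuinely needs balancedness: if $b>a+1$, then taking a right-straight $d$ with $(d,b)$ an inversion and a left-straight $e$ with $(a,e)$ an inversion, via Lemma~\ref{lm:str}, produces an \emph{irregular} rec, either $(d,b,a+1,e)$ or $(d,b,e,a+1)$, having $a$ as a left switchback --- contradiction.) Once $b=a+1$, the paper's repair is to \emph{swap} the two markings, setting $M'(a)=LR$ and $M'(b)=RL$, rather than flipping $a$ alone. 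Because $a$ and $b$ are adjacent in position and consecutive in value, they are interchangeable as potential switchbacks of every rec not containing them, so each rec that loses $a$ as a left (resp.\ right) switchback gains $b$ in its place and vice versa; recs containing $a$ or $b$ as members cannot have the other as a switchback at all. Hence balance of every rec, regular or irregular, is preserved with no chain of corrections, and termination follows from the potential $f(M')=$ sum of the \emph{values} of elements marked $LR$, which drops by exactly $1$ per swap. In short: your first paragraph is a correct but weaker version of the paper's Lemma~\ref{lm:b=a+1}, and without its conclusion $b=a+1$ the local surgery you attempt cannot be closed up.
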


By Lemma \ref{lm:X}, the following theorem implies the ``if" direction  of
Theorem~\ref{thm:main}.

\begin{theorem}
\label{thm:xc} Let $M$ be a balanced and aligned marking for a permutation
$\pi$. Then $\pi$ has a perfect tangle with marking $M$.
\end{theorem}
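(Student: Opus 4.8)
The plan is to prove Theorem~\ref{thm:xc} by induction on $\inv(\pi)$, building the tangle from the top downward, one layer of simultaneous swaps at a time. The base case $\inv(\pi)=0$ is the identity, drawn as $n$ vertical paths, whose only marking is the all-$\emptyset$ marking. For the inductive step I would use the aligned condition to read the first layer of swaps directly off $M$: let $P$ be the set of positions $p$ such that $\pi=[\dots ab\dots]$ with $a=\pi(p)$, $b=\pi(p+1)$, $M(a)=R\dots$, and $M(b)=L\dots$. Since no marking begins with both $R$ and $L$, two positions of $P$ can never be adjacent (if $p,p+1\in P$ then $\pi(p+1)$ would start with both $L$ and $R$), so the swaps $\{\sigma(p):p\in P\}$ are non-overlapping; and since $M$ is aligned, each such $(a,b)$ is an inversion. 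Performing them all simultaneously therefore decreases the inversion number by exactly $|P|$ and keeps the partial tangle simple.

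This layer yields a permutation $\pi'=\pi\cdot\prod_{p\in P}\sigma(p)$. I would define the residual marking $M'$ by recomputing the left/right/switchback classification of each element in $\pi'$ and letting $M'$ retain the still-unused suffix of $M$ (deleting a leading letter exactly when the corresponding segment has just been completed). Two claims then drive the induction: (i) $M'$ is again a balanced and aligned marking for $\pi'$, so the induction hypothesis supplies a perfect tangle $T'$ for $\pi'$ realizing $M'$; and (ii) each swap of the top layer glues onto $T'$ as a \emph{straight continuation}, so that placing the layer above $T'$ produces a tangle for $\pi$ with marking $M$ in which every path still has at most one L-segment and one R-segment. Claim (ii) reduces to checking that an element which moved right (resp.\ left) in the top layer and still has crossings of the same type to perform does so at the very next layer of $T'$, with no intervening vertical step.

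I expect the heart of the argument, and the only place the full strength of the hypotheses is used, to be the no-stall statement in (ii) together with the preservation statement in (i). For (ii) the danger is that an element $a$ in the middle of its R-segment is momentarily blocked: its immediate right neighbour in $\pi'$ is an element that $a$ must still cross but that is itself still in its own right phase, forcing $a$ to pause and thereby splitting its R-segment in two. I would rule this out by first noting that this neighbour cannot be vertical (a $\emptyset$-element sitting between $a$ and a smaller element that $a$ must still cross would itself form an inversion and hence be a right element, a contradiction), so it must be a left element; and then by using the rec-balance condition to show that every switchback lying \emph{inside} a rectangle has entered its left phase by the time the relevant right-moving path reaches it. Concretely, for each regular rec the equality of the numbers of left and right switchbacks is, via Lemma~\ref{lm:invx}, exactly the bookkeeping that lets the four sides of the rectangle close up without any path stalling, while the emptiness of irregular recs rules out the remaining obstructions. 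Propagating balance and alignment from $\pi$ to $\pi'$ (the recs of $\pi'$ being recs of $\pi$ with some crossings already spent) is where the bulk of the case analysis lives, and is the main obstacle.

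Finally I would assemble the pieces: the induction terminates because $\inv$ strictly decreases at each layer; the resulting drawing is simple because every pair of paths crosses at most once, its crossings being exactly the inversions of $\pi$ each spent once, by Lemma~\ref{lm:invx}; and it is perfect with marking $M$ because, by claim~(ii), the segments accumulated along each path $i$ reproduce $M(i)\in\{\emptyset,L,R,LR,RL\}$, giving at most one L-segment and one R-segment per path as required.
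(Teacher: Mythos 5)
Your skeleton is the same as the paper's: induction on $\inv(\pi)$, peeling off top swaps at positions where $M$ reads $R\dots$ immediately to the left of $L\dots$ (alignedness making each such pair an inversion), passing to a residual marking on $\pi'$, invoking the induction hypothesis, and gluing. But the proposal leaves unproven exactly the two statements that carry the mathematical content, and you flag them yourself as ``the main obstacle.'' First, your claim (i) --- that the recomputed marking $M'$ is again balanced and aligned --- is the paper's Lemmas~\ref{lm:M'bal} and~\ref{lm:M'align}; their proofs rest on a further structural fact (Lemma~\ref{lm:b=a+1}: if $\pi=[\dots ab\dots]$ with $M(a)=R\dots$, $M(b)=L\dots$ and $a<b$, then $b=a+1$, $M(a)=RL$, $M(b)=LR$) and a careful case analysis of how recs of $\pi'$ sit inside $\pi$; nothing in your sketch substitutes for this. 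Second, and more seriously, your claim (ii) is misframed. Since $T'$ is handed to you by the induction hypothesis as an \emph{arbitrary} perfect tangle for $\pi'$, there is no greedy process in which paths could ``stall''; its internal schedule is fixed, and what you must prove is a quantitative geometric fact about it --- namely the paper's Lemma~\ref{lm:height}: when the markings of the swapped pair $a,b$ are unchanged, the topmost swap of $a$ and the topmost swap of $b$ occur at \emph{equal heights} in $T'$, so that a single inserted cross can extend both of their segments at once. The paper proves this by identifying the paths crossing the relevant segments with the left and right switchbacks of a specific rec $(c,a,b,d)$ and converting rec-balance into an equality of segment lengths; your sentence that balance is ``exactly the bookkeeping that lets the four sides of the rectangle close up'' gestures at this argument but does not give it. You also omit the cases the paper must treat separately: when $\pi'$ has a split at the swap position, and when the marking of $a$ or $b$ changes, where the cross is instead placed directly above the topmost swap of $b$.

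There is also a concrete flaw in the layer-at-a-time variant itself: you cannot place the whole layer at a single height above $T'$. For two swapped pairs at different positions, the heights at which their continuing segments begin inside $T'$ need not agree (if the permutation splits into independent blocks, $T'$ may draw those blocks at arbitrary relative heights), so each cross must be inserted at its own height. Once you allow that, the layer decomposes into independent single-swap insertions, i.e., the induction might as well remove one swap at a time, exactly as the paper does --- the layering buys nothing, while obliging you to prove the preservation lemmas for a multi-swap step, which is strictly more work. So the plan is the right plan, but as written it remains a plan: the balance/alignment-propagation lemmas and the equal-heights lemma are missing, and they are the proof.
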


The proof of Theorem~\ref{thm:xc} will be by induction on inversion number
similar to the proof of Theorem~\ref{thm:2corner}.  To handle the various
cases of the induction step, we will need the lemmas below, which will require
the following assumptions. Let $a$ and $b$ be two elements such that
\begin{equation} \label{eq1}
a=\pi(s), \quad b=\pi(s+1), \quad \text{ and }(a,b) \text{ is an inversion}.
\end{equation}
Let $M$ be a balanced and aligned marking for $\pi$ such that
\begin{equation} \label{eq2}
M(a)=L\dots, \quad\text{and}\quad M(b)=R\dots .
\end{equation}
Let $\pi'=\pi\cdot \sigma(s)$ and $M'$ be a marking for $\pi'$ defined as
follows. If $i\notin\{a,b\}$ then $M'(i)=M(i)$.  Let
\begin{equation} \label{eq3}
M'(a)=\begin{cases} RL & \text{if $a$ is switchback for $\pi'$,} \\
L & \text{if $a$ is straight for $\pi'$,}\\ \emptyset & \text{otherwise,}\end{cases}
\end{equation}
and define $M'(b)$ symmetrically.


\begin{lemma}
\label{lm:M'bal} Let $a=\pi(s)$ and  $b=\pi(s+1)$ and $\pi' = \pi \cdot
\sigma(s)$. Let $M$ be a balanced aligned marking for $\pi$, and let $M'$ be
a marking for $\pi'$ defined as in \eqref{eq3}. Then $M'$ is balanced.
\end{lemma}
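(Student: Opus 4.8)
The plan is to compare the recs of $\pi'$ with those of $\pi$ and to track how the balance of each rec changes when we pass from $M$ to $M'$. First I would record the basic structure. Since $\pi'=\pi\cdot\sigma(s)$ merely reverses the adjacent pair $a=\pi(s),\,b=\pi(s+1)$ and $(a,b)$ is an inversion, the inversions of $\pi'$ are exactly those of $\pi$ with $(a,b)$ removed. Hence $a$ stays left and $b$ stays right in $\pi'$; combined with \eqref{eq2} this forces both $a$ and $b$ to be switchbacks of $\pi$ with $M(a)=LR$, $M(b)=RL$, and in $\pi'$ each either remains a switchback or becomes straight (never left-straight for $b$, never right-straight for $a$), exactly as in \eqref{eq3}. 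I would also note that $M$ and $M'$ agree off $\{a,b\}$, so only the switchbacks sitting at positions $s,s+1$ can alter any rec's switchback counts.

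Next I would classify the recs of $\pi'$. A quadruple that does not list $a$ immediately followed by $b$ (in $\pi$), equivalently $b$ immediately followed by $a$ (in $\pi'$), is a rec of $\pi'$ iff it is a rec of $\pi$, with identical value conditions; call these \emph{unchanged}. The remaining recs list $a,b$ consecutively. For these I would exploit adjacency: no element lies strictly between positions $s$ and $s+1$, so one of the two switchback types of such a rec is automatically absent, while the other coincides with the switchbacks of a corresponding irregular rec of $\pi$. For example $(a,b,y,z)$ of $\pi$ becomes $(b,a,y,z)$ of $\pi'$; its would-be left switchbacks need a position strictly between $a$ and $b$ and so are absent, and its right switchbacks are precisely those of the irregular rec $(a,b,y,z)$, which are empty because $M$ is balanced. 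Thus every changed rec of $\pi'$ is empty, hence balanced; the recs of $\pi$ of the form $(w,a,b,z)$ simply cease to be recs and impose no condition.

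The heart of the argument is the unchanged recs. For a regular unchanged rec $\rho=(w,x,y,z)$ I would show $|\rho^{M'}_\ell|=|\rho^{M'}_r|$ starting from $|\rho^{M}_\ell|=|\rho^{M}_r|$. Because $a,b$ occupy consecutive positions they fall into the same position-gap determined by $\rho$'s corners, which forces them to be candidates for the same switchback type: only-left in the $(w,x)$-gap, only-right in the $(y,z)$-gap, or neither. In the only-right case the left count is unchanged, $\rho^{M'}_\ell=\rho^{M}_\ell$, and the whole problem collapses to the single identity $[a\in(w,x)]=[\,b\in(w,x)\text{ and }M'(b)=LR\,]$, with a symmetric identity in the only-left case. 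I expect this identity to be the main obstacle: a term-by-term match fails because $a$ and $b$ differ in value, and the identity genuinely uses that $M$ is balanced and aligned on $\pi$. The plan for it is to contradict each mode of failure. If $a\notin(w,x)$ (so $a>x$) while $b$ still qualifies, then using that $a$ is left — some $j>a$ precedes it, and via Lemma~\ref{lm:str} one may take $j$ straight — I would build a \emph{forbidden} nonempty irregular rec having $a$ as a right switchback, contradicting the balance (emptiness) forced by $M$. The reverse failure, where $a\in(w,x)$ but $b$ no longer qualifies, is ruled out because $x$ precedes $a$ with $x>b$, so $b$ cannot shed its left status in $\pi'$, and because a value gap between $a$ and $b$ would again manifest as a nonempty irregular rec. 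Collecting the empty changed recs together with the balanced unchanged recs then gives that $M'$ is balanced.
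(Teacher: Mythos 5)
Your overall architecture (separating recs of $\pi'$ into those containing both $a,b$ and those containing at most one of them, and disposing of the former via adjacency plus emptiness of irregular recs) is sound, and it makes explicit a step the paper glosses over. The problem lies in the hypotheses you start from and, consequently, in the ``heart of the argument.'' You read \eqref{eq2} literally, as $M(a)=L\dots$, $M(b)=R\dots$, which with \eqref{eq1} forces $M(a)=LR$, $M(b)=RL$, so that \eqref{eq3} \emph{flips} both markings ($a\colon LR\to RL$, $b\colon RL\to LR$). But \eqref{eq2} as printed is a typo: the lemma is invoked in the proof of Theorem~\ref{thm:xc} for a pair with $M(\pi(s))=R\dots$ and $M(\pi(s+1))=L\dots$ (``Thus, $a$, $b$ and $M$ satisfy \eqref{eq1} and \eqref{eq2}''), the proof of Lemma~\ref{lm:height} explicitly uses $M'(a)=M(a)=R\dots$, and the paper's own proof of the present lemma deduces $c=a$ from $M(c)=RL$, which is legitimate only when $M(b)\in\{L,LR\}$. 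Under the intended, transposed hypothesis, $a$ never changes from $LR$ to $RL$ (its mark is $R\dots$ before and after), so neither $a$ nor $b$ can acquire a switchback role it did not already have; the paper then proves that every rec of $\pi'$ has \emph{exactly the same} left and right switchback sets under $M'$ as under $M$.

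This is not a notational quibble, because under your reading the statement is false, so your sketch cannot be completed. Take $\pi=[6,4,2,1,7,3,5]$, $s=2$, so $a=4$, $b=2$, and let $M(6)=M(7)=R$, $M(1)=M(3)=M(5)=L$, $M(4)=LR$, $M(2)=RL$. This $M$ is a marking for $\pi$, it is aligned (the consecutive pairs marked $R\dots,L\dots$ are $(6,4)$, $(2,1)$, $(7,3)$, all inversions), and it is balanced: the irregular recs $(6,4,2,1)$, $(6,4,2,3)$, $(6,4,1,3)$ are empty, and the unique regular rec $(6,7,3,5)$ has no right-switchback candidates while its only left-switchback candidate, $4$, carries $LR$. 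It satisfies \eqref{eq1} and your reading of \eqref{eq2}. Now $\pi'=[6,2,4,1,7,3,5]$; both $4$ and $2$ remain switchbacks of $\pi'$, so \eqref{eq3} (read your way) gives $M'(4)=RL$ and $M'(2)=LR$ --- and then the regular rec $(6,7,3,5)$ of $\pi'$ has the left switchback $4$ and still no right switchbacks, so $M'$ is not balanced. This realizes exactly the failure mode you claimed to exclude: your key identity, in its ``only-left'' form $[\,b\in(y,z)\,]=[\,a\in(y,z)\text{ and }M'(a)=RL\,]$, fails here ($b=2\notin(3,5)$ but $a=4\in(3,5)$), and the ``forbidden nonempty irregular rec'' that was supposed to witness the value gap between $a=4$ and $b=2$ (the element $3$) does not exist, since every irregular rec of $\pi$ is empty. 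So the contradiction you plan to derive from balance and alignment is simply not available, and no local repair of that step can work; the fix is to use the intended orientation of \eqref{eq2}, under which the set-preservation argument of the paper goes through.
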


\begin{lemma}
\label{lm:M'align} Under the assumptions of Lemma \ref{lm:M'bal}, $M'$ is
aligned.
\end{lemma}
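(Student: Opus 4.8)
The plan is to verify the alignment condition for $M'$ one adjacent pair of $\pi'$ at a time, exploiting that $M'$ agrees with $M$ off $\{a,b\}$ and that $\pi'$ agrees with $\pi$ off positions $s,s+1$. First I would dispose of the bulk of the pairs cheaply: any consecutive pair $\pi'=[\dots xy\dots]$ with $\{x,y\}\cap\{a,b\}=\emptyset$ sits at positions $(t,t+1)$ with $t\notin\{s-1,s,s+1\}$, so both its elements, their $M'$-values, and their order agree with the corresponding pair of $\pi$; hence the alignment requirement for such $x,y$ is inherited verbatim from the alignment of $M$ with $\pi$. This leaves only the three ``boundary'' pairs of $\pi'$, namely $(c,b)$ at positions $(s-1,s)$, $(b,a)$ at $(s,s+1)$, and $(a,d)$ at $(s+1,s+2)$, where $c=\pi(s-1)$ and $d=\pi(s+2)$ (each present only when that position exists; $s=1$ removes $(c,b)$ and $s+1=n$ removes $(a,d)$).

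For the two outer pairs I would reduce directly to the alignment of $M$. If the premise of the alignment condition holds for $(c,b)$, then in particular $M'(c)=M(c)=R\dots$; combining this with the given $M(a)=L\dots$ and applying the alignment of $M$ to the consecutive pair $(c,a)$ of $\pi$ forces $(c,a)$ to be an inversion, i.e.\ $c>a$. Since $(a,b)$ is an inversion we have $a>b$, so $c>b$, and as $c$ precedes $b$ in $\pi'$ the pair $(c,b)$ is an inversion, as required. The pair $(a,d)$ is symmetric: if its premise holds then $M'(d)=M(d)=L\dots$, and the alignment of $M$ applied to the consecutive pair $(b,d)$ of $\pi$ (using the given $M(b)=R\dots$) yields $b>d$; hence $a>b>d$ and $(a,d)$ is an inversion in $\pi'$.

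The crux is the middle pair $(b,a)$, which is a \emph{non}-inversion of $\pi'$ (as $b<a$ and $b$ precedes $a$), so alignment here cannot be met by exhibiting an inversion and must instead be made vacuous. I would prove the stronger fact that $b$ is a switchback of $\pi'$, whence $M'(b)=LR$ begins with $L$ and the premise $M'(b)=R\dots$ never holds. Indeed, $b$ stays right in $\pi'$: a witness $j<b$ to the right of $b$ in $\pi$ lies at a position $>s+1$, which the swap leaves fixed and still to the right of $b$ in $\pi'$. And $b$ becomes left in $\pi'$ precisely because $M(a)=L\dots$ makes $a$ left in $\pi$, so some $j>a$ occupies a position $<s$; this $j$ satisfies $j>a>b$ and lies to the left of $b$ in $\pi'$. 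Being both left and right in $\pi'$, $b$ is a switchback, so by \eqref{eq3} (its symmetric form for $b$) we have $M'(b)=LR$, and the alignment condition for $(b,a)$ is vacuously satisfied.

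The main obstacle is exactly this middle pair: because the swap converts the inversion $(a,b)$ into the non-inversion $(b,a)$, the only route to alignment is to force the premise false, which is where the hypothesis $M(a)=L\dots$ (equivalently the leftness of $a$ in $\pi$) is indispensable. A symmetric argument shows $a$ is a switchback of $\pi'$ with $M'(a)=RL$, consistent with---though not needed for---the treatment of the outer pairs. It remains only to record the routine point that $M'$ is a marking for $\pi'$, which is immediate from \eqref{eq3} together with the observations above that $a$ stays left and $b$ stays right in $\pi'$.
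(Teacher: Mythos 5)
Your proof is internally consistent, but it is calibrated to a typo in the paper, and under the hypotheses the lemma actually needs (and that the paper's own proof uses) your main steps fail. Equation \eqref{eq2} as printed says $M(a)=L\dots$ and $M(b)=R\dots$, and you rely on this orientation three times: for the pair $(c,b)$ you apply alignment of $M$ to $(c,a)$ using $M(a)=L\dots$; for $(a,d)$ you apply it to $(b,d)$ using $M(b)=R\dots$; and for the middle pair you argue that $b$ ``stays right'' because $M(b)=R\dots$. However, this orientation is inconsistent with the rest of the paper: the proof of Lemma~\ref{lm:M'bal} states explicitly that $M(b)=L\dots$, and in the proof of Theorem~\ref{thm:xc} the lemmas are invoked with $M(\pi(s))=M(a)=R\dots$ and $M(\pi(s+1))=M(b)=L\dots$ (this is also forced geometrically: the cross $x$ drawn at the top of the construction makes path $a$ begin with an R-segment). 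So the intended hypotheses are $M(a)=R\dots$ and $M(b)=L\dots$, exactly the reverse of what you used.

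Under the intended hypotheses your treatment of the outer pairs collapses: alignment of $M$ applied to $(c,a)$ (or to $(b,d)$) has a false premise and yields nothing, so you cannot conclude $c>b$ (or $a>d$) this way. That case is precisely where the content of the lemma lies, and the paper works much harder there: assuming that the pair $(a,c)$ with $c=\pi(s+2)$ violates alignment, it first uses Lemma~\ref{lm:M'bal} to know that $M'$ is balanced, then applies Lemma~\ref{lm:b=a+1} to force $c=a+1$, $M'(a)=RL$, $M'(c)=LR$, and finally exhibits an irregular rec $(e,a,b,d)$ of $\pi$ having $c$ as a right switchback, contradicting the balancedness of $M$. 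Your proposal uses the balancedness of neither $M$ nor $M'$ and has no counterpart of this argument --- a clear sign that what you proved is the much easier reversed-orientation statement, which is not the one needed for the induction in Theorem~\ref{thm:xc}. One genuine merit: your explicit handling of the middle pair $(b,a)$ covers a case that the paper's ``without loss of generality'' silently skips (it is not equivalent to the outer cases under the left--right mirror symmetry). Under the intended hypotheses that case remains easy --- if $M'(a)=L\dots$ then $a$ is left in $\pi'$, so some $e>a>b$ precedes $b$, making $b$ left in $\pi'$ and hence $M'(b)\neq R\dots$ --- but your outer-pair arguments cannot be repaired by any such local fix.
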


\begin{lemma}
\label{lm:height} Under the assumptions of Lemma \ref{lm:M'bal}, suppose $T'$
is any perfect tangle for $\pi'$.  If $\pi'$ does not have a split at $a$,
and if $M(a)=M'(a)$ and $M(b)=M'(b)$, then the heights of the topmost swap of
$a$ and the topmost swap of $b$ in the drawing of $T'$ are equal.
\end{lemma}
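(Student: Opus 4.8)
The plan is to argue by contradiction, closely following the height-alignment argument in Case~2 of the proof of Theorem~\ref{thm:2corner}. Write $h_a$ and $h_b$ for the heights of the topmost swaps on paths $a$ and $b$ in the drawing of $T'$; the goal is to show $h_a=h_b$. The reason this matters is that, in the inductive step that builds a perfect tangle for $\pi$ out of $T'$, the $a$--$b$ crossing is to be spliced on directly above both paths, and this is possible without creating an extra corner precisely when their topmost swaps sit at a common height. Since $M(a)=M'(a)$ and $M(b)=M'(b)$, the shapes of paths $a$ and $b$ near the top of $T'$ are exactly those prescribed by $M$, and I would use these fixed shapes together with perfectness of $T'$ (each path has at most one L- and one R-segment, and $T'$ is simple, so Lemma~\ref{lm:invx} applies) throughout.

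First I would verify that both topmost swaps exist, i.e.\ $h_a,h_b>-\infty$, playing the role of the observation $h_s>-\infty$ in Theorem~\ref{thm:2corner}. In $\pi'$ the elements $a$ and $b$ occupy the adjacent positions $s+1$ and $s$, and since $(a,b)$ is no longer an inversion of $\pi'$, paths $a$ and $b$ do \emph{not} cross in $T'$ by Lemma~\ref{lm:invx}. The hypothesis that $\pi'$ has no split at $a$ is exactly what forbids the degenerate situation in which the blocks on the two sides of the $s\mid s{+}1$ boundary are solved independently; it forces a path to cross that boundary, and together with the prescribed shapes this forces both $a$ and $b$ to perform a swap, so $h_a,h_b>-\infty$.

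Now suppose for contradiction that the heights differ; by the left/right symmetry of the configuration I may assume $h_a>h_b$. Then just below height $h_a$ path $a$ leaves its initial vertical segment onto its first diagonal segment, while path $b$ is still running vertically in its starting column. I would track which path $a$ meets at height $h_a$ and follow the forced consequences. If $a$'s first diagonal heads toward the column still occupied by $b$, then, because $b$ is vertical there, $a$ and $b$ are made to cross, contradicting that they form no inversion of $\pi'$. Otherwise $a$ turns the other way, and the element it crosses, together with $a$, $b$, and a fourth element straddling the boundary (supplied by the no-split hypothesis), is forced into a configuration that either makes one of the paths acquire a \emph{second} L- or R-segment, contradicting perfectness, or exhibits a split of $\pi'$ at $a$, contradicting the hypothesis. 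Ruling out the latter is where the rec/marking bookkeeping enters, via the fact that $M$ is balanced and aligned and that $M'$ inherits these properties (Lemmas~\ref{lm:M'bal} and~\ref{lm:M'align}). Once $h_a>h_b$ is excluded, the symmetric argument excludes $h_b>h_a$, giving $h_a=h_b$.

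The main obstacle I anticipate is this local case analysis at the two topmost swaps. Unlike the direct-tangle setting, paths here may be switchbacks, so ``turning away from $b$'' does not by itself bound the number of corners; one must combine simplicity of $T'$ with the precise marking data $M(a)=M'(a)$, $M(b)=M'(b)$ to convert a height mismatch into either an illegal second diagonal segment on some path or a split at $a$. Getting the bookkeeping right, and confirming that the no-split hypothesis is exactly strong enough to exclude the separated configuration while not over-constraining the rest of $T'$, is the delicate part of the proof.
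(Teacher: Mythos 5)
Your proposal has a genuine gap at its core step. You plan to convert the assumed height mismatch $h_a>h_b$ into one of three qualitative violations: a forced crossing of $a$ and $b$, a path acquiring a second L- or R-segment, or a split of $\pi'$ at $s$. None of these is the mechanism that makes the lemma true, and none of them materializes. First, given the markings, $a$ (which sits at position $s+1$ in $\pi'$) starts with an R-segment and $b$ (at position $s$) starts with an L-segment, so the two paths move \emph{apart} and your first case is vacuous. More importantly, in the remaining case a hypothetical drawing with $h_a\neq h_b$ violates no local constraint at all: every path can still have at most one L- and one R-segment, $T'$ can still be simple, and $\pi'$ still has no split at $s$. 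What a height mismatch actually contradicts is a \emph{counting identity}, and that quantitative ingredient is entirely absent from your plan, even though you correctly sense (``the rec/marking bookkeeping'') that something of this kind is needed.

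The paper's proof supplies exactly this ingredient, and it is a direct computation rather than a contradiction argument. Since $\pi'$ has no split at $s$, the tangle $T'$ has a topmost swap $x$ at position $s$, formed by some element $c$ coming from the left and $d$ coming from the right; using $M'(a)=M(a)=R\dots$ and $M'(b)=M(b)=L\dots$ one checks that $(c,a,b,d)$ is a rec of $\pi$. Writing $Q_a,Q_b$ for the topmost corners of $a,b$ and $P_{cb},P_{ad}$ for the crossings of $c$ with $b$ and of $d$ with $a$, the height of $Q_b$ above $x$ equals the total rise along the polyline $x\to P_{cb}\to Q_b$, and since each unit of rise on a diagonal segment is one crossing, this rise is determined by the number of paths crossing that polyline; similarly on the other side for $Q_a$. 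Because $x$ is the \emph{topmost} swap at position $s$, any path crossing one of these two flanking polylines is trapped and must cross both of its segments, and such paths are shown to be precisely the left (resp.\ right) switchbacks of the rec $(c,a,b,d)$ under $M$ (this is the type-$1$/type-$2$ analysis, in the spirit of Lemma~\ref{lm:type==sb}). Balancedness of $M$ equates the two switchback counts, hence the two rises, hence the two heights. To repair your proof you would need to identify the rec $(c,a,b,d)$, establish the ``crossing paths $=$ switchbacks'' correspondence, and invoke balance of that rec---at which point you have reproduced the paper's argument, with or without the contradiction wrapper.
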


\begin{proof}[Theorem \ref{thm:xc}]
The proof will be by induction on inversion number.
The statement of the theorem holds for the identity permutation, giving us
the basis of induction.

Consider a permutation $\pi$ with $\inv(\pi) = k > 0$ and a balanced aligned
marking $M$ for $\pi$. Since $\pi$ is not the identity permutation, let $i$
be minimal such that $\pi(i) \ne i$. It is easy to see that $M(\pi(i))=R$.
Symmetrically, for the maximal $j$ with $\pi(j) \ne j$ we have $M(\pi(j))=L$.
Therefore, there exists $i\le s < j$ such that $M(\pi(s)) = R\dots$ and
$M(\pi(s+1)) = L\dots$. Let us denote $a=\pi(s)$ and $b=\pi(s+1)$. Since $M$
is aligned, $(a, b)$ is an inversion. Thus, $a$, $b$ and $M$ satisfy
(\ref{eq1}) and (\ref{eq2}).

The permutation $\pi' = \pi \cdot \sigma(s)$ has $\inv(\pi') = \inv(\pi) -
1$. We define a marking $M'$ using Equation (\ref{eq3}). By construction,
$M'$ is a marking for $\pi'$. By Lemma~\ref{lm:M'bal}, $M'$ is balanced, and
by Lemma~\ref{lm:M'align}, $M'$ is aligned. Therefore, by the induction
hypothesis, there is a perfect tangle $T'$ for $\pi'$ with marking $M'$. To
obtain tangle the $T$ for $\pi$ we swap elements $a$ and $b$, and draw the
swap as a cross $x$ on the plane.  We then draw the tangle $T'$ below the
cross giving us a simple tangle for $\pi$. If $\pi'$ has a split at $s$ then
using the same arguments as in the case~(i) of Theorem~\ref{thm:2corner} we
adjust the heights of the cross and $T'$ so that $T$ is perfect. Let us
assume that $\pi'$ has no split at $s$.

If $M'(a)=M(a)$ and $M'(b)=M(b)$ then by Lemma \ref{lm:height} the height of
the topmost swap of $a$ and the height of the topmost swap of $b$ in $T'$ are
equal; see Fig.~\ref{fig:perfect_proof}(a). Therefore by shifting the cross
$x$ it is easy to construct a perfect tangle $T$; see
Fig.~\ref{fig:perfect_proof}(b).

\begin{figure}[t]
    \center
	\includegraphics[height=2.5cm]{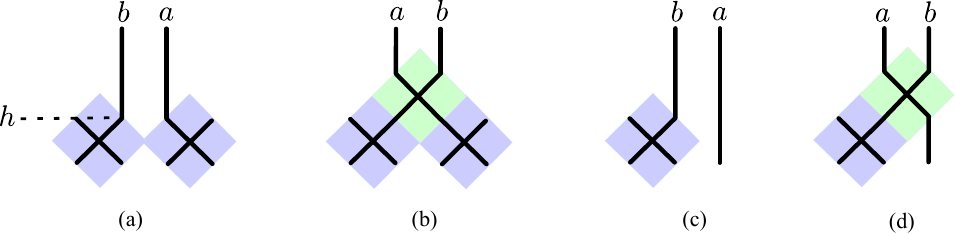}
    \caption{(a)~In the tangle $T'$ (blue) the heights of the topmost swaps
    of $a$ and $b$ are equal to $h$. (b)~The tangle $T$ with the cross $x$ (green).
    (c)--(d)~Adding the cross $x$ in the case $M'(a)=L$ and $M'(b)=L\dots$.}
    \label{fig:perfect_proof}
\end{figure}

Now let us consider the case when the marking of $a$ or $b$ changes. Without
loss of the generality, assume that $M'(a) \neq M(a)$. Since $\pi'$ does not
have a split at $s$, we have $M'(a)\ne \emptyset$ and therefore $M'(a)=L$.
Since $M'(b)\ne \emptyset$, we have $M'(b)=L\dots$, see
Fig.~\ref{fig:perfect_proof}(c). To obtain a perfect $T$ we draw the cross
$x$ immediately above the topmost swap of $b$; see
Fig.~\ref{fig:perfect_proof}(d).
\end{proof} 

We need another lemma to prove lemmas \ref{lm:X}, \ref{lm:M'bal}, and
\ref{lm:M'align}.


\begin{lemma}
\label{lm:b=a+1} Let $M$ be a balanced marking for a permutation $\pi =
[\dots a b \dots]$ with $M(a)=R\dots$, and $M(b)=L\dots$, and $a < b$. Then
$b = a+1$, and $M(a)=RL$, and $M(b)=LR$.
\end{lemma}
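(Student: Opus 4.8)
My plan is to separate the two conclusions about the markings from the conclusion $b=a+1$, since the former follow directly from the classification of elements, while the latter is where the balance hypothesis is essential. First I would fix witnesses for the hypotheses. Since $M(a)=R\dots$, the element $a$ is right, so there is an inversion $(a,c)$ with $c<a$ and $c$ occurring after $a$; because $a,b$ are adjacent and $c<a<b$, in fact $c$ occurs after $b$. Symmetrically, $M(b)=L\dots$ makes $b$ left, giving an inversion $(d,b)$ with $d>b$ and $d$ before $b$, hence before $a$. Thus $\pi=[\dots d\dots ab\dots c\dots]$ with $c<a<b<d$. Now $(b,c)$ is an inversion (as $b>c$ and $b$ precedes $c$), so $b$ is also right and therefore a switchback; since $M(b)=L\dots$ this forces $M(b)=LR$. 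Likewise $(d,a)$ is an inversion, so $a$ is also left and therefore a switchback, and $M(a)=R\dots$ forces $M(a)=RL$. This settles the marking claims without any appeal to balance.

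It remains to prove $b=a+1$, and here I would argue by contradiction, using that in a balanced marking every irregular rec is empty. Suppose some $m$ satisfies $a<m<b$; as $a$ and $b$ are adjacent, $m$ must occur either after $b$ or before $a$. If $m$ occurs after $b$, I would take the quadruple with position order $d,b$ followed by $c$ and $m$ in their order of occurrence. This is a rec, since $\min\{d,b\}=b>m=\max\{c,m\}$, and it is irregular because $d>b$. Yet $a$ is a left switchback of it: $M(a)=RL$, the element $a$ lies between $d$ and $b$ in position, and $c<a<m$ places the value of $a$ strictly between the bottom pair. This contradicts emptiness of the irregular rec. If instead $m$ occurs before $a$, I would use the symmetric quadruple with position order $m,d$ (in their order of occurrence) followed by $a$ and $c$. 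It is a rec since $\min\{m,d\}=m>a=\max\{a,c\}$, it is irregular because its bottom pair $a,c$ satisfies $a>c$, and $b$ is a right switchback of it, because $M(b)=LR$, the element $b$ lies between $a$ and $c$ in position, and $m<b<d$ places the value of $b$ strictly between the top pair. Again this contradicts balance, so no such $m$ exists and $b=a+1$.

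The genuinely easy part is the element classification in the first paragraph; the case analysis in the second paragraph is routine bookkeeping, but the one point that requires care is verifying in each case that the four chosen elements really form a rec in the claimed position order — in particular that both $c$ and $m$ fall after $b$ in the first case, and that both $m$ and $d$ fall before $a$ in the second — and that the irregularity is produced by the intended pair. The conceptual crux, which I expect to be the main obstacle to discover rather than to verify, is recognizing that the witnesses $c$ and $d$ forced by the markings, together with any intermediate value $m$, inevitably assemble into an irregular rec carrying a switchback, which is exactly the configuration that the balance condition rules out.
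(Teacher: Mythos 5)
Your proof is correct and takes essentially the same route as the paper's: first force $M(a)=RL$ and $M(b)=LR$ from inversion witnesses, then rule out any value strictly between $a$ and $b$ by exhibiting an irregular rec that carries a switchback, contradicting balance. The only cosmetic differences are that you use arbitrary inversion witnesses where the paper invokes Lemma~\ref{lm:str} to get straight ones (straightness is never actually used), and you split cases by the position of the intermediate element rather than by a symmetry reduction.
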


\begin{proof}
Let us show that $M(a)=RL$. Indeed, since $M(b)=L\dots$ there exists an
inversion $(c, b)$. Since $a < b$ we conclude that $(c, a)$ is an inversion
too, which means that $a$ is a left element. Therefore, $M(a)$ contains a
letter $L$, that is, $M(a)=RL$. Symmetrically, $M(b)=LR$.

Let us prove that $b = a+1$. Suppose for a contradiction that $b > a + 1 =:
c$. By Lemma~\ref{lm:str} there exists a right-straight $d$ such that $(d,
b)$ is an inversion, and there exists a left-straight $e$ such that $(a, e)$
is an inversion.  Hence, we have $e < a < c < b < d$. By symmetry we assume
that $\pi=[\dots d \dots a b \dots c\dots]$. Consider two cases according to
the order of $c$ and $e$ in $\pi$.

{\em Case 1}. Suppose that $\pi=[\dots d \dots a b \dots c \dots e\dots]$.
Notice that $(d, b, c, e)$ is a rec, and $a$ is a left switchback of the rec
under $M$. Since $d > b$ the rec is irregular, contradicting that $M$ is
balanced.

{\em Case 2}. Suppose that $\pi=[\dots d \dots a b \dots e \dots c\dots]$. We
have an irregular rec $(d, b, e, c)$ with a left-switchback $a$, which again
is a contradiction.
\end{proof}

\begin{proof}[Lemma \ref{lm:X}]
We show how to construct an aligned balanced marking $M'$ from $M$.
Initially, we set $M' = M$. Marking $M'$ is balanced at the initialization,
and we keep it balanced all the time.

We iteratively change $M'$ until it is aligned. If $M'$ is not aligned then
there exists a pair $a, b$ such that $\pi = [\dots a, b\dots]$, where $a <
b$, and $M'(a)=R\dots$, and $M'(b)=L\dots$. By Lemma~\ref{lm:b=a+1} we have
$b=a+1$, $M'(a)=RL$, and $M'(b)=LR$. We change $M'$ by setting $M'(a)=LR$ and
$M'(b)=RL$. Of course, $M'$ is still a marking of $\pi$.
It is easy to see that the iterative process is finite. Indeed, consider the
function $f(M')$ equal to the sum of elements with marking $LR$. On every
step $f(M')$ decreases by one, and $f(M')$ is clearly non-negative for any
marking $M'$. Therefore, after a finite number of iterations we obtain a
marking aligned with $\pi$.

\sloppypar
Let us prove that $M'$ remains balanced after an iteration. Consider any rec
$(u, v, w, x)$ of $\pi$. The rec is balanced on the previous iteration. To
become non-balanced it needs to lose or acquire a switchback. Then this
switchback is $a$ or $b$, since these are the only elements with new
markings. We consider three cases.

\begin{enumerate}[(i)]
\item Suppose neither $a$ nor $b$ is a member of the rec.  Without loss of generality we have $\pi =
    [\dots u \dots a b \dots v \dots  w \dots x \dots]$ (the other case with
	is
    symmetrical). Then the rec loses a left switchback $a$ and acquires a
    new left switchback $b$; hence, it remains balanced.

\item Suppose $u = a$. Then by definition of a rec we have $x < u = a <
    a+1 = b$. Hence, $b$ cannot be a left or a right switchback of the
    rec (either before or after the change), which means that its balance
    does not change.
\item The case $v=b$ is similar to (ii).
\end{enumerate}
All other cases follow by symmetry.
\end{proof}

\begin{proof}[Lemma \ref{lm:M'bal}]
Let $\rho=(u, v, w, x)$ be a rec of $\pi'$. Notice that all inversions of
$\pi'$ are inversions of $\pi$; hence, $\rho$ is a rec in $\pi$. We prove
that the set of its left (right) switchbacks under $M'$ in $\pi'$ is equal to
the set of its left (right) switchbacks under $M$ in $\pi$. Let us establish
the equality for the left switchbacks; the case with the right switchbacks is
symmetrical.

Consider a left switchback $c$ for $\rho$ in $\pi$ under $M$. Suppose for a
contradiction that $c$ is not a left switchback for $\rho$ in $\pi'$ under
$M'$.

\begin{enumerate}[(i)]
\item If $M(c) \ne M'(c)$ then by definition of $M'$ we have $c=a$ or
    $c=b$. Since $c$ is a left switchback for $\rho$ in $\pi$, we have
    $M(c)=RL$. Therefore, $c=a$ and $M(a)=M(c) \neq M'(c)=M'(a) = L$. By
    definition of a switchback, $a = c > \min(w, x)$. Since $b$ is a
    neighbor of $a$ in $\pi$, we have $b \not \in \{w, x\}$. Hence, $(a,
    \min(w, x))$ is an inversion in $\pi'$, so $a$ is a right element. We
    have a contradiction with $M'(a) = L$.

\item Suppose that $M(c) = M'(c)$. If $c$ is not a left switchback for
    $\rho$ in $\pi'$ and $M'(c) = RL$ then $\pi' = [\dots u \dots v c
    \dots]$. Hence, $c=a$ and $v=b$. In this case $a = c < \max(w, x) < v
    = b$, that is, $a < b$. We have a contradiction with $(a, b)$ being
    an inversion in $\pi$.
\end{enumerate}

Let us prove the converse: if $c$ is a left switchback for $\rho$ in $\pi'$
under $M'$ then $c$ is a left switchback for $\rho$ in $\pi$ under $M$. We
have $M'(c)=RL$, therefore, $M(c)=RL$ too. Suppose for a contradiction that
$c$ is not a left switchback for $\rho$ in $\pi$ under $M$. Then either $\pi
= [\dots u \dots v c \dots]$ or $\pi = [\dots c u \dots v \dots]$. In the
first case $a=v$ and $b=c$, which is a contradiction because $M(b)=L \dots
\ne M(c)=RL$. In the second case we have $a=c$ and $b=u$; hence, $a < u = b$,
which contradicts to the fact that $(a, b)$ is an inversion in $\pi$.
\end{proof}

\begin{proof}[Lemma \ref{lm:M'align}]
Suppose for a contradiction that $M'$ is not aligned. Without loss of
generality, assume that for some $c$ we have $\pi'=[\dots b a c \dots]$, with
$M'(a)=R\dots$, and $M'(c)=M(c)=L\dots$, and $(a, c)$ not an inversion. By
Lemma~\ref{lm:M'bal}, $M'$ is balanced; therefore, by Lemma~\ref{lm:b=a+1},
$c=a+1$. Since $a$ is right in $\pi'$, there exists an inversion $(a, d)$ in
$\pi'$ and in $\pi$. Since $c$ is left in $\pi$, there is an inversion $(e,
c)$ in $\pi$. Notice that $e \not \in \{a,b\}$ because $b < a < c$. Thus,
$\pi=[\dots e \dots a b c \dots d \dots]$. Then $(e, a, b, d)$ is a rec in
$\pi$. The rec is irregular since $a < e$. The element $c$ is a right
switchback of the rec, which contradicts $M$ being balanced.
\end{proof}

\begin{proof}[Lemma \ref{lm:height}]
Since $\pi'$ does not have a split at $s$, $T'$ has a swap at position $s$.
Let $c,d$ be the elements forming the topmost swap $x$ at position $s$ in
$T'$. Since $x$ is the topmost swap, $d$ starts to the right of $a$ in the
drawing of $T'$. Similarly, $c$ starts to the left of $b$; see
Fig.~\ref{fig:sh}(a). Hence, $\pi = [\dots c \dots a b \dots d\dots]$. Notice
that $M'(a)=M(a)=R\dots$; thus, path $a$ crosses path $d$ in $T'$ above $x$
and $a > d$. Similarly, path $b$ crosses path $c$ above $x$ and $b < c$.
Therefore, $(c, a, b, d)$ is a rec in $\pi$.
\begin{figure}[t]
\begin{center}
\includegraphics[height=4cm]{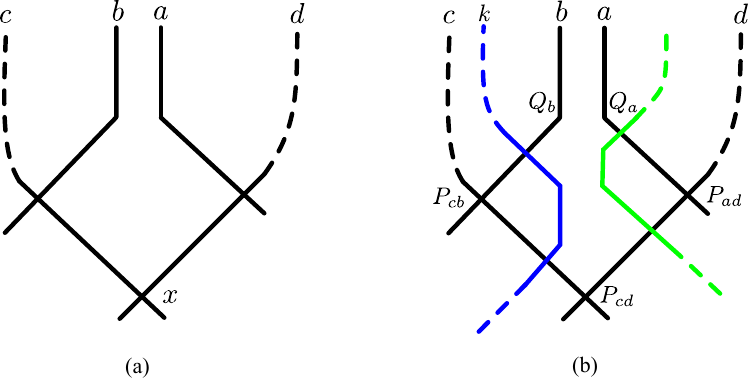}
\end{center}
\caption{(a)~A fragment of tangle $T'$ with the swap $x$.
(b)~The blue path $k$ is of type $1$; the green path is of type $2$. Points $Q_a$ and $Q_b$
have the same height by Lemma~\ref{lm:height}.}
\label{fig:sh}
\end{figure}

Let $P_{cb}$ be the intersection point of paths $c$ and $b$,  and
similarly define $P_{cd}$ and $P_{ad}$. Let $Q_b$ and $Q_a$ be the topmost
corners of paths $b$ and $a$ in $T'$; see Fig.~\ref{fig:sh}(b). To prove the
lemma, it sufficient to show that
\begin{equation}
|P_{cd}P_{cb}| + |P_{cb}Q_b| = |P_{cd}P_{ad}| + |P_{ad}Q_a|,
\end{equation}
where $|\cdot|$ denotes the length of a segment.

There are two types of paths crossing those line segments. Indeed, a path
crossing, for example, $P_{cd}Q_b$ at an inner point has to turn left before
reaching position $s+1$, otherwise it would create a swap higher than the
swap $x$. We say that path $k$ has type $1$ (type $2$) if it crosses segments
$P_{cd}P_{cb}$ and $P_{cb}Q_b$ ($P_{cd}P_{ad}$ and $P_{ad}Q_a$). Let us show
that a path $k$ has type $1$ (type $2$) if and only if $k$ is a left
switchback (right switchback) of the rec $(c, a, b, d)$ in $\pi$. We prove
the claim for type 1, the other case being symmetrical.

Consider a path $k$ of type $1$ as in Fig~\ref{fig:sh}(b). Since $T'$ is
simple, we have $\pi = [\dots c \dots k \dots a b \dots]$ and $(k, b)$ is an
inversion. Since $k$ crosses paths $c$ and $b$, we have $M(k)=RL$. That means
$k$ does not cross path $d$, that is, $k < d$, otherwise the marking of $k$
would be contain two $R$s. Therefore, $b < k < d$ and $k$ is a left
switchback of the rec.

From the other side, let us suppose that $k$ is a left switchback of the
regular rec $(c, a, b, d)$. Then $\pi=[\dots c \dots k \dots a b \dots]$ and
$M(k)=M'(k)=RL$. By the definition of a left switchback, we have $b < k < d$.
Hence, $(c, k)$ and $(k, b)$ are inversions of $\pi$ and path $k$ intersects
paths $c$ and $b$. Since $M'(k) = RL$, the path $k$ first crosses $b$ and
then $c$, that is, $k$ is of type $1$.

Since the rec is balanced under $M$, the number of paths of type $1$ is equal
to the number of paths of type $2$. Each path of type $1$ adds the same
amount to the left part of the equation $(1)$ as a path of type $2$ adds to
the right, so the equation holds.
\end{proof}

\section{Proof of ``only if'' direction of Theorem~\ref{thm:main}}

Let us prove the following useful lemma.

\begin{lemma}
\label{lm:type==sb} Let $T$ be a perfect tangle for $\pi$ with marking $M$.
Let $\rho = (a, b, c, d)$ be a rec in $\pi$ forming rectangle $P_\ell
P_tP_rP_b$ in the drawing of $T$ ($P_t$ is the topmost point of the
rectangle, $P_\ell$ is the leftmost, $P_r$ is the rightmost, and $P_b$ is the
lowest). Then path $k$ crosses segments $P_\ell P_t$ and $P_\ell P_b$ (resp.\
$P_rP_t$ and $P_rP_b$) if and only if $k$ is a left (resp.\ right) switchback
of $\rho$.
\end{lemma}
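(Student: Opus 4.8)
The plan is to argue entirely in the drawing, using Lemma~\ref{lm:invx} together with the facts that the tangle is simple (each pair of paths crosses at most once) and perfect (each path has at most one L-segment and at most one R-segment). First I would set up the rectangle. Since $\min\{a,b\}>\max\{c,d\}$ and $\pi=[\dots a\dots b\dots c\dots d\dots]$, each of $(a,c),(a,d),(b,c),(b,d)$ is an inversion, so by Lemma~\ref{lm:invx} the unique R-segments of $a$ and $b$ each cross the unique L-segments of $c$ and $d$, and these four crossing points are exactly the corners of the rectangle. Its two R-sides (slope $-45^\circ$) lie on the R-segments of $a$ and $b$, and its two L-sides lie on the L-segments of $c$ and $d$. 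I would record the purely geometric fact that the two sides meeting at the leftmost corner $P_\ell$ are one R-side ($P_\ell P_b$, descending to the right) and one L-side ($P_\ell P_t$, ascending to the right), and symmetrically at $P_r$; which of $a,b$ owns the left R-side and which of $c,d$ owns the left L-side is then read off from the orientation of the four corners.

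The engine for both directions is an order-tracking argument. For the forward implication, let $k$ be a left switchback, so $M(k)=RL$ and $k$ lies strictly between $a$ and $b$ in position and strictly between $c$ and $d$ in value. From the inversions forced by these inequalities, Lemma~\ref{lm:invx} tells me exactly which of $a,b,c,d$ the path $k$ meets: its L-segment crosses the R-side $P_\ell P_b$ and its R-segment crosses the L-side $P_\ell P_t$, while it meets neither of the two opposite sides' owners. It then remains to show that these two crossings fall strictly \emph{between} the relevant corners, i.e.\ on the segments $P_\ell P_b$ and $P_\ell P_t$ rather than on their extensions. I would obtain this by following the sweeping segment owning each side and comparing left--right orders, using that a pair of paths which never cross keep their initial order, an inversion pair crosses exactly once, and $M(k)=RL$ places $k$'s R-crossings above its L-crossings. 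In the regular case this gives, for example, that the side-owner $c$ meets $b$ before $k$ (because $k$ stays left of $b$) and meets $k$ before $a$ (because at the height of the $k$--$c$ crossing, which lies above the $a$--$k$ crossing, $a$ is still left of $k$ and hence of $c$), placing the crossing strictly inside $P_\ell P_t$; the $P_\ell P_b$ claim is symmetric. Running the same comparisons in reverse proves the converse: if $k$ crosses $P_\ell P_t$ and $P_\ell P_b$, it crosses an L-side with its R-segment and an R-side with its L-segment, the relative heights of these crossings force $M(k)=RL$, and the betweenness of the crossings on the two sides forces the position and value conditions, so $k$ is a left switchback. The right-switchback statement follows by the left--right reflection.

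The step I expect to be the main obstacle is establishing that a crossing lands on the correct \emph{sub-segment} of a side and not merely on the line through it: this is exactly what separates a genuine switchback from a path that happens to cross the same two paths somewhere else entirely, so the order-tracking must be carried out carefully. A related difficulty is that the assignment of which path owns each side of the left wedge is not combinatorially fixed but depends on the geometry, and in particular can differ between regular and irregular recs (for a regular rec the left wedge is carried by $a$'s R-segment and $c$'s L-segment, whereas for an irregular rec the L-side owner can be $d$ rather than $c$). Handling the irregular recs uniformly---where extra crossings arise because $c$ and $d$ (or $a$ and $b$) themselves form an inversion---will require the most attention, and for this reason I would keep the side-identification intrinsic (``the side through $P_\ell$'') rather than committing in advance to a particular owner.
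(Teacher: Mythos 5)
Your proposal is correct and follows essentially the same route as the paper's proof: both rest on Lemma~\ref{lm:invx} plus simplicity/perfectness to identify which of $a,b,c,d$ the path $k$ crosses, then use the height ordering imposed by $M(k)=RL$ (R-crossing above L-crossing) together with the fact that $k$ never crosses $b$ or $\max(c,d)$ to pin each crossing onto the correct sub-segment, and read the same facts backwards for the converse, with the right-switchback case by reflection. Even the point you flag as delicate---that side ownership must be deduced rather than assumed, especially for irregular recs---is handled in the paper by the same style of order-tracking argument (if the other path owned the near side, $k$ would be forced to cross a path it cannot cross).
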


\begin{proof}
Let us prove the claim for the left switchbacks.  The other case is
symmetrical.

To prove the ``if'' part, let $k$ be a left switchback of $\rho$, see
Fig.~\ref{fig:balance2}(a). By definition of a left switchback, $k$ crosses
$u = \min(c, d)$. Notice that $P_\ell P_t$ belongs to the L-segment of $u$,
since otherwise $k$ would also cross $\max(c, d)$. Denote the intersection
point of $k$ and $u$ by $p$. We need to show that $p$ lies on the segment
$P_\ell P_t$.
\begin{figure}[t]
\begin{center}
\includegraphics[height=4cm]{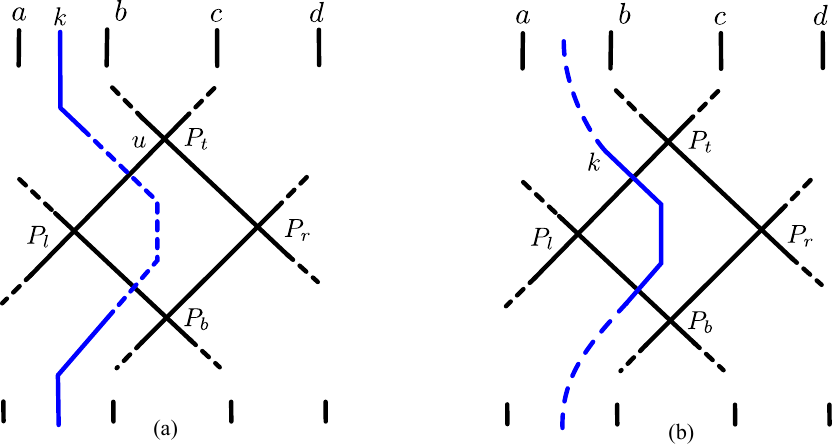}
\end{center}
\caption{(a)~Left switchback of the rec $(a, b, c, d)$ has to cross segments
 $P_\ell P_t$ and $P_\ell P_b$.
(b)~Path $k$ crossing $P_\ell P_t$ and $P_\ell P_b$ is necessarily a left
 switchback of $(a, b, c, d)$.}
\label{fig:balance2}
\end{figure}

Since $M(k)=LR$, path $k$ first crosses the L-segment of $u$ and then the
R-segment of $a$ (when followed from top to bottom of the tangle). Therefore,
$k$ crosses $P_\ell P_t$ above point $P_\ell$, that is, the intersection
point is higher than $P_\ell$. On the other hand, $k$ and $b$ do not cross,
which means that $p$ is below $P_t$. Hence, $p$ lies on $P_\ell P_t$.

By using symmetrical arguments it is easy to show that the intersection point
between paths $\min(a, b)$ and $k$ is on the segment $P_\ell P_b$. Therefore,
the left switchback $k$ crosses $P_\ell P_t$ and $P_\ell P_b$.

%

To prove the ``only if'' part of the claim, let $k$ be a path crossing
segments $P_\ell P_t$ and $P_\ell P_b$, see Fig.~\ref{fig:balance2}(b). If
$P_\ell P_t$ belongs to the L-segment of path $c$ then (i) $k > c$ since $k$
crosses $P_\ell P_t$, and (ii) $k < d$ since $k$ does not cross $d$
(otherwise, the marking of $k$ would contain two $L$s); that is, $c < k < d$.
If $P_\ell P_t$ belongs to the L-segment of path $d$ then by a symmetrical
argument $d < k < c$. It is also easy to see that the marking of $k$ is $RL$;
hence, $k$ is a left switchback of $\rho$.
\end{proof}


\begin{proof}[The ``only if'' direction of Theorem~\ref{thm:main}]
Let $T$ be a perfect tangle for a permutation $\pi$ and $M$ be the marking
corresponding to $T$. We need to prove that $M$ is balanced.

Let $\rho = (a, b, c, d)$ be a rec of $\pi$. It is easy to see that the
L-segments of paths $c$ and $d$ and the R-segments of paths $a$ and $b$ form
a rectangle in the drawing of $T$. We consider two cases.

\vspace{1mm}
\noindent {\bf Case 1.}
Suppose $\rho$ is regular. Let us show that the number of its right
    and left switchbacks under $M$ is the same. The L-segments of paths
    $c$ and $d$ cross the R-segments of paths $a$ and $b$. Let $P_{ac},
    P_{bc}, P_{ad}, $ and $P_{bd}$ be the crossing points of the
    segments. Since $\rho$ is regular, $P_{bc}$ is the topmost and
    $P_{ad}$ is the bottommost in the drawing of $T$; see
    Fig.~\ref{fig:balance}(a).

\begin{figure}[htb]
\begin{center}
\includegraphics[height=4cm]{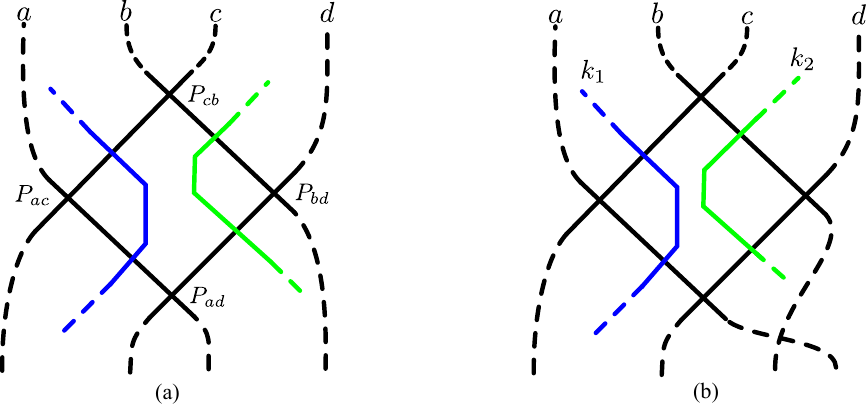}
\end{center}
\caption{(a)~In a perfect tangle every left switchback is of type $1$ (blue), and every right
switchback is of type $2$ (green).
(b)~Impossible configuration in a perfect tangle: path $k_2$ of type $2$ does not allow
paths $a$ and $b$ to cross each other ``below'' the rectangle.}
\label{fig:balance}
\end{figure}

We say that a path has type $1$ (resp.\ type $2$) in $T$ if it crosses
segments $P_{ac}P_{bc}$ and $P_{ac}P_{ad}$ (resp., $P_{bc}P_{bd}$ and
$P_{bd}P_{ad}$). By Lemma~\ref{lm:type==sb}, every left switchback of
$\rho$ is of type $1$, and every right switchback of $\rho$ is of type
$2$.  Since $P_{ac}P_{bc}P_{bd}P_{ad}$ is a rectangle, the number of paths
that intersect the side $P_{ac}P_{bc}$ equals the number that intersect the opposite side
$P_{bd}P_{ad}$.  A path may intersect neither, both, or only one of these two sides.  Those
that intersect only $P_{ac}P_{bc}$ are precisely the type 1 paths, while those that intersect only
$P_{bd}P_{ad}$ are precisely the type 2 paths.  Therefore
 in the drawing of
$T$, the number of paths of type $1$ is equal to the number of paths of
type $2$.  So, $\rho$ is balanced.

\vspace{1mm}
\noindent {\bf Case 2.}
Suppose $\rho$ is irregular. Let us show that it does not have any switchbacks under $M$.
Suppose for a contradiction that $\rho$ has a left switchback $k_1$.
The L-segments of $c$ and $d$ and the R-segments of $a$ and $b$ form a rectangle $P_\ell P_tP_rP_b$ in the drawing of $T$.
Then by Lemma~\ref{lm:type==sb}, $k_1$ crosses segments $P_\ell P_t$ and $P_\ell P_b$.
Hence, by the same argument as in Case 1 above,
there is a path $k_2$ of type $2$; the path is a right  switchback of $\rho$ by Lemma~\ref{lm:type==sb}.
Since the rec $\rho$ is irregular, at least one of the pairs of paths $(a,b)$ or $(c,d)$ has a crossing, which may occur above or below the
rectangle.  Without loss of generality, suppose that $a$ and $b$ cross below the rectangle, as in Fig.~\ref{fig:balance}(b).
Then, since $b<k_2<a$ by the definition of a rec, $k_2$ must cross $b$ twice, which is a contradiction.
\end{proof}

\end{document}